\documentclass[]{eptcs}

 % Name of the event you are submitting to

%\usepackage{breakurl}             % Not needed if you use pdflatex only.
\usepackage{amssymb} 
\usepackage{amsthm}
\usepackage{latexsym} 
\usepackage{amsmath} 
\usepackage{verbatim} 
\usepackage{algorithm} 
\usepackage[noend]{algorithmic} 
\usepackage{graphicx} 
\usepackage{xcolor} 
\usepackage{epsfig}
\usepackage{comment}
\usepackage{cite}

\newtheorem{corollary}{Corollary}

\newtheorem{theorem}{Theorem}
\newtheorem{proposition}{Proposition}

\newtheorem{lemma}{Lemma}

\newcommand{\Sub}{\mbox{\hspace{.01in}Sub}}

\newcommand{\RSub}{\mbox{\hspace{.01in}RSub}}

\renewcommand{\d}{{\ensuremath{\diamond}}}

\title{Recurrent Partial Words\thanks{This material is based upon work supported by the National Science Foundation under Grant No. DMS--0754154. The Department of Defense is also gratefully acknowledged.}}

\author{Francine Blanchet-Sadri
\institute{Department of Computer Science, University of North Carolina,\\
P.O. Box 26170, Greensboro, North Carolina 27402--6170, USA}  
\email{blanchet@uncg.edu}
\and
Aleksandar Chakarov 
\institute{Department of Computer Science, University of Colorado at Boulder,\\
430 UCB, Boulder, Colorado 80309--0430, USA}
\and
Lucas Manuelli 
\institute{Department of Mathematics, Princeton University,\\
Fine Hall, Washington Road, Princeton, New Jersey 08544--1000, USA}
\and
Jarett Schwartz
\institute{Department of Computer Science, Princeton University,\\
35 Olden Street, Princeton, New Jersey 08540--5233, USA}
\and
Slater Stich
\institute{Department of Mathematics, Princeton University,\\
Fine Hall, Washington Road, Princeton, New Jersey 08544--1000, USA}
}

\begin{document}

\maketitle 

\begin{abstract}
Partial words are sequences over a finite alphabet that may contain wildcard symbols, called holes, which match or are compatible with all letters; partial words without holes are said to be full words (or simply words). Given an infinite partial word $w$, the number of distinct full words over the alphabet that are compatible with factors of $w$ of length $n$, called subwords of $w$, refers to a measure of complexity of infinite partial words so-called subword complexity.  This measure is of particular interest because we can construct partial words with subword complexities not achievable by full words. In this paper, we consider the notion of recurrence over infinite partial words, that is, we study whether all of the finite subwords of a given infinite partial word appear infinitely often, and we establish connections between subword complexity and recurrence in this more general framework. 
\end{abstract}

\section{Introduction} 
\label{sec:introduction}

Let $w$ be a (right) infinite word over a finite alphabet $A$. A subword of $w$ is a block of consecutive letters of $w$. The {\em subword complexity} function, $p_w(n)$, counts the number of distinct subwords of length $n$ in $w$. Subword complexity is a well-studied topic and relates to dynamical systems, ergodic theory, theoretical computer science, etc. \cite{All,DBLP:books/daglib/0025558,Cas97,DBLP:journals/dm/Ferenczi99,Ghe}. Another topic of interest on infinite words is the one of {\em recurrence}. An infinite word is said to be recurrent if every subword appears infinitely many times. In 1938, Morse and Hedlund introduced many concepts dealing with recurrence \cite{MoHe38}. Rauzy in \cite{Rau} surveys subword complexity and recurrence in infinite words, while Cassaigne in \cite{DBLP:conf/stacs/Cassaigne01} surveys some results and problems related to recurrence.

Partial words are sequences over a finite alphabet that may contain wildcard symbols, called holes, which match, or are compatible with, all letters in the alphabet (full words are those partial words without holes). Combinatorics on partial words is a relatively new subject \cite{DBLP:journals/tcs/BerstelB99,BSbook}; oftentimes the basic tools have not yet been developed. In \cite{DBLP:conf/tamc/Blanchet-SadriSSW10}, Blanchet-Sadri et al. investigated finite partial words of maximal subword complexity where the subword complexity function of a partial word $w$ over a finite alphabet $A$ assigns to each positive integer, $n$, the number, $p_w(n)$, of distinct full words over $A$ that are compatible with factors of length $n$ of $w$. In \cite{DBLP:conf/lata/ManeaT10}, Manea and Tiseanu showed that computing subword complexity in the context of partial words is a ``hard'' problem.

In \cite{BSChMaScSt_1}, with the help of our so-called hole functions, we constructed infinite partial words $w$ such that $p_w(n)=\Theta(n^\alpha)$ for any real number $\alpha > 1$. In addition, these partial words have the property that there exist infinitely many non-negative integers $m$ satisfying $p_w(m+1)-p_w(m) \geq m^{\alpha}$. Combining these results with earlier ones on full words, we showed that this represents a class of subword complexity functions not achievable by full words. We also constructed infinite partial words with intermediate subword complexity, that is between polynomial and exponential. 

In this paper, we introduce recurrent infinite partial words and show that they have several nice properties. Some of the properties that we present deal with connections between recurrence and subword complexity. Besides reviewing some basics in Section~\ref{sec:preliminaries} and concluding with some remarks in Section~\ref{sec:conclusion}, our paper can roughly be divided into two parts: Among other things, Section~\ref{sec:recurrent_partial_words} extends well-known results on recurrent infinite full words to infinite partial words. Section~\ref{sec:completions} uses the results obtained previously to prove new results. There, we study the relationship between the subword complexity of an infinite partial word $w$ and that of its various completions; here a completion is a ``filling in'' of the holes of $w$ with letters from the alphabet. In particular we ask when can a completion achieve maximal, or nearly maximal, complexity? It turns out that this is intimately related to the notion of recurrence.

\section{Preliminaries} 
\label{sec:preliminaries}
For more information on basics of partial words, we refer the reader to \cite{BSbook}. Unless explicitly stated, $A$ is a finite alphabet that contains at least two distinct letters. We denote the set of all words over $A$ by $A^*$, which under the concatenation operation forms a free monoid whose identity is the empty word $\varepsilon$. 

A \emph{finite partial word} of length $n$ over $A$ is a function $w:\{0,\ldots,n-1\}\rightarrow A\cup \{\d\}$, where $\d \not \in A$. The union set $A\cup \{\d\}$ is denoted by $A_{\d}$ and the length of $w$ by $|w|$. A \emph{right infinite partial word} or \emph{infinite partial word} over $A$ is a function $w:\mathbb{N}\rightarrow A_{\d}$. 
In both the finite and infinite cases, the symbol at position $i$ in $w$ is denoted by $w(i)$. If $w(i) \in A$, then $i$ is defined in $w$, and if $w(i) = \d$, then $i$ is a hole in $w$. If $w$ has no holes, then $w$ is a \emph{full word}. 
A \emph{completion} $\hat{w}$ is a ``filling in'' of the holes of $w$ with letters from $A$. Two partial words $u$ and $v$ are compatible, denoted $u\uparrow v$, if there exist completions $\hat{u}$ and $\hat{v}$ such that $\hat{u} = \hat{v}$.

A finite partial word $w$ over $A$ is said to be $p$-\emph{periodic}, if $p$ is a positive integer such that $w(i) = w(j)$ whenever $i$ and $j$ are defined in $w$ and satisfy $i \equiv j \bmod {p}$. We say that $w$ is \emph{periodic} if it is $p$-periodic for some $p$. An infinite partial word $w$ over $A$ is called \emph{periodic} if there exists a positive integer $p$ (called a \emph{period} of $w$) and letters $a_0, a_1, \ldots, a_{p-1} \in A$ such that for all $i \in \mathbb{N}$ and $j \in \{0,\ldots, p-1\}$, $i \equiv j \bmod{p}$ implies $w(i) \uparrow a_j$. If $w$ is an infinite partial word, then we define the {\em shift} $\sigma_p(w)$ by $\sigma_p(w)(i) = w(i+p)$. The infinite partial word $w$ is called \emph{ultimately periodic} if there exist a finite partial word $u$ and an infinite periodic partial word $v$ (both over $A$) such that $w = uv$. If $w$ is a full ultimately periodic word, then $w=xy^{\omega}=xyyy\cdots$ for some finite words $x, y$ with $y \not = \varepsilon$ called a {\em period} of $w$ (we also call the length $|y|$ a period). If $|x|$ and $|y|$ are as small as possible, then $y$ is called the {\em minimal period} of $w$.  

Given a partial word $w$ over $A$, a finite partial word $u$ is a \emph{factor} of $w$ if there exists some $i \in \mathbb{N}$ such that $u = w(i)\cdots w(i+|u|-1)$. We adopt the following notations for factors: $w(i..j)$ (resp., $w[i..j)$, $w(i..j]$, $w[i..j]$) denotes $w(i+1) \cdots w(j-1)$ (resp., $w(i)\cdots w(j-1)$, $w(i+1)\cdots w(j)$, $w(i)\cdots w(j)$). On the other hand, a finite full word $u$ is a \emph{subword} of $w$, denoted $u\lhd w$, if there exists some $i \in \mathbb{N}$ such that $u\uparrow w[i..i+|u|)$. In the context of this paper, subwords are always finite and full.
We denote by $\Sub_w(n)$ the set of all subwords of $w$ of length $n$, and by $\Sub(w) = \bigcup_{n\geq0}\Sub_w(n)$ the set of all subwords of $w$. Note that $p_w(n)$ is precisely the cardinality of $\Sub_w(n)$.  Furthermore, if $\hat{w}$ is a completion of $w$, then $p_{\hat{w}}(n) \leq p_w(n)$, since $\Sub_{\hat{w}}(n) \subset \Sub_{w}(n)$. 

The following result extends well-known necessary conditions for a function to be the subword complexity function of an infinite full word \cite{DBLP:journals/dm/Ferenczi99}.

\begin{theorem}\label{thm:Fer} The following are necessary conditions for a function $p_w$ from $\mathbb{N}$ to $\mathbb{N}$ to be the subword complexity function of an infinite partial word $w$ over a finite alphabet $A$:
\begin{enumerate}
\item $p_w$ is non-decreasing;
\item $p_w(m+n) \leq p_w(m) p_w(n)$ for all $m,n$;
\item whenever $p_w(n) \leq n$ or $p_w(n+1) = p_w(n)$ for some $n$, then $p_w$ is bounded;
\item if $A$ has $k$ letters, then $p_w(n) \leq k^n$ for all n; if $p_w(n_0) < k^{n_0}$ for some $n_0$, then there exists a real number $\kappa < k$ such that $p_w(n) \leq \kappa^n$ for all $n$ sufficiently large.
\end{enumerate}
\end{theorem}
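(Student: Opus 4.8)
The plan is to treat the four items in increasing order of difficulty, proving each directly from the structure of the sets $\Sub_w(n)$, taking care throughout that an occurrence of a subword may sit over holes. Assume $p_w$ is the subword complexity function of an infinite partial word $w$ over a $k$-letter alphabet $A$.

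For items (1) and (2) the idea is to exhibit well-behaved maps between the sets $\Sub_w(n)$. For monotonicity I would show that the ``length-$n$ prefix'' map $\Sub_w(n+1)\to\Sub_w(n)$ is surjective: given $u\in\Sub_w(n)$ with $u\uparrow w[i..i+n)$, inspect the next position $w(i+n)$; if it is a letter $a$ then $ua\uparrow w[i..i+n]$, and if it is a hole then $ua\uparrow w[i..i+n]$ for any $a\in A$, so in either case some length-$(n+1)$ subword restricts to $u$, giving $p_w(n)\le p_w(n+1)$. For submultiplicativity I would send $u\in\Sub_w(m+n)$ with $u\uparrow w[i..i+m+n)$ to the pair consisting of its length-$m$ prefix and its length-$n$ suffix; each coordinate lies in $\Sub_w(m)$, respectively $\Sub_w(n)$, because compatibility is preserved under restriction, and the map is injective since a full word is determined by its prefix and suffix, yielding $p_w(m+n)\le p_w(m)p_w(n)$.

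Item (4) splits into a trivial bound and an asymptotic one. Since $\Sub_w(n)$ is a set of full words of length $n$ over a $k$-letter alphabet, $p_w(n)\le k^n$ is immediate. For the second part I would run the standard Fekete-type argument on the submultiplicativity of item (2): if $p_w(n_0)<k^{n_0}$, set $\kappa=p_w(n_0)^{1/n_0}<k$; writing $n=qn_0+r$ with $0\le r<n_0$ gives $p_w(n)\le p_w(n_0)^q\,p_w(r)\le C\kappa^n$ for a constant $C$ depending only on $n_0$, and absorbing $C$ by slightly enlarging the base to some $\kappa'<k$ yields $p_w(n)\le(\kappa')^n$ for all large $n$.

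The heart of the proof, and the main obstacle, is item (3); everything reduces to the claim that $p_w(n+1)=p_w(n)$ forces $p_w$ to be eventually constant. When $p_w(n+1)=p_w(n)$, the surjection from item (1) becomes a bijection, so every $v\in\Sub_w(n)$ has a \emph{unique} right extension $va\in\Sub_w(n+1)$. I would then propagate this: to see $p_w(n+2)=p_w(n+1)$, suppose $vb_1,vb_2\in\Sub_w(n+2)$ share the prefix $v\in\Sub_w(n+1)$; their length-$(n+1)$ suffixes $v'b_1,v'b_2$, where $v'=v[1..n+1)$, both lie in $\Sub_w(n+1)$ and both restrict to $v'\in\Sub_w(n)$, so unique extension forces $b_1=b_2$. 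Hence the prefix map $\Sub_w(n+2)\to\Sub_w(n+1)$ is injective as well as surjective, giving $p_w(n+2)=p_w(n+1)$, and iterating shows $p_w$ is constant beyond $n$, hence bounded. The subtlety relative to the full-word case is purely bookkeeping: since occurrences carry holes, a single full subword can arise from many factors, but because the argument is phrased entirely in terms of the sets $\Sub_w(\cdot)$ and their prefix/suffix maps this causes no trouble. Finally, the hypothesis $p_w(n)\le n$ reduces to this case: the non-decreasing positive integers $p_w(1)\le\cdots\le p_w(n)$ with $p_w(n)\le n$ are either not all distinct, giving an equality $p_w(k+1)=p_w(k)$ and hence boundedness, or strictly increasing, which forces $p_w(1)=1$; and $p_w(1)=1$ means no position is a hole (a hole would contribute all $k\ge 2$ letters to $\Sub_w(1)$) and a single letter occurs, so $w=a^{\omega}$ is bounded.
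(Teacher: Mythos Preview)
Your argument is correct. The paper's own proof (present in the source but wrapped in a \texttt{comment} environment) addresses only item~(4), and your treatment of that item is essentially the same Fekete-type argument: use submultiplicativity from~(2) together with monotonicity from~(1) to bound $p_w(n)\le p_w(n_0)^{\lceil n/n_0\rceil}$, then absorb the constant by passing to a slightly larger base strictly below $k$. The paper parametrises this with auxiliary reals $\xi,\delta$ satisfying $p_w(n_0)<\xi^{n_0}$ and $\xi+\delta<k$, whereas you take $\kappa=p_w(n_0)^{1/n_0}$ and then enlarge to $\kappa'<k$; these are the same idea in different clothing.

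Where you go beyond the paper is in supplying complete arguments for items~(1)--(3), which the paper simply imports from the full-word literature. Your prefix/suffix maps on $\Sub_w(\cdot)$ handle (1) and~(2) cleanly, and your Morse--Hedlund propagation for~(3) is the right observation: since the argument is phrased entirely at the level of the sets $\Sub_w(n)$ and the restriction maps between them, the presence of holes in the underlying factors never enters, so the classical proof carries over verbatim. Your reduction of the hypothesis $p_w(n)\le n$ to the equality case via the pigeonhole on $p_w(1),\ldots,p_w(n)$, together with the remark that $p_w(1)=1$ forces $w$ to be hole-free and constant, is also sound.
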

\begin{comment}
\begin{proof}
For Condition~(4), when working over an alphabet of size $k$, we note that there exist at most $k^n$ distinct words of length $n$. Therefore, $p_w(n) \leq k^n$. For the second part, suppose there exists a positive integer $n_0$ such that $p_w(n_0) < k^{n_0}$. There exist positive real numbers $\xi$ and $\delta$ such that $\xi + \delta < k$, with $p_w(n_0) < \xi^{n_0}$. Then there exists an integer $N>0$ such that for all $n\geq N$ we have $\xi^{n+n_0} = C\xi^n \leq (\xi + \delta)^{n}$, where $C = \xi^{n_0}$ is a constant. Let $\gamma = \xi + \delta$. Choose $n\geq N$ and set $m = \lceil \frac{n}{n_0} \rceil$. Then using Conditions (1) and (2) we see that $p_w(n) \leq p_w(mn_0) \leq (p_w(n_0))^m$. By our choice of $\xi$, $p_w(n) \leq \xi^{mn_0} \leq \xi^{n+n_0}$. Therefore, $p_w(n) \leq \gamma^n$. Since $\gamma < k$ this is indeed the desired constant. 
\end{proof}
\end{comment}

\section{Recurrent Partial Words}
\label{sec:recurrent_partial_words}

Recurrence is a well-studied topic in combinatorics on infinite full words. We turn our attention to the study of infinite recurrent partial words. We call an infinite partial word $w$ \emph{recurrent} if every $u \in \Sub_w(n)$ occurs infinitely often in $w$; that is, there are infinitely many $j$'s such that $w(j+i) \uparrow u(i)$ for $i \in \{0, \ldots, n-1\}$. We call an infinite partial word $w$ \emph{uniformly recurrent}, if for every $u \in \Sub_w(n)$, there exists $m \in \mathbb{N}$ such that every factor of length $m$ of $w$ has $u$ as a subword, that is, $u \lhd w[0..m-1]$, $u \lhd w[1..m]$, \ldots. Clearly, a uniformly recurrent partial word is recurrent. The following proposition gives a few equivalent formulations of recurrence.

\begin{proposition}\label{pro:recurrence}
	Let $w$ be an infinite partial word. The following are equivalent:
	\begin{enumerate}
		\item The partial word $w$ is recurrent;
		\item Every subword compatible with a finite prefix of $w$ occurs at least twice;
		\item Every subword of $w$ occurs at least twice.
	\end{enumerate}
\end{proposition}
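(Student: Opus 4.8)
The plan is to prove the cyclic chain of implications $(1)\Rightarrow(2)\Rightarrow(3)\Rightarrow(1)$, since $(1)\Rightarrow(2)$ and $(3)\Rightarrow(1)$ are each nearly immediate and the real content sits in $(2)\Rightarrow(3)$. First I would establish $(1)\Rightarrow(2)$: if $w$ is recurrent then every subword occurs infinitely often, so in particular every subword that happens to be compatible with a finite prefix occurs infinitely often, hence at least twice. The direction $(3)\Rightarrow(1)$ is the heart of the upgrade from ``twice'' to ``infinitely often,'' and I would argue it by a bootstrapping argument: suppose every subword occurs at least twice, and let $u \in \Sub_w(n)$ be arbitrary. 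Since $u$ occurs at least twice, there are positions $j_1 < j_2$ with $u \lhd w[j_k..j_k+n)$. The key idea is to iterate: given any occurrence of $u$ at position $j$, I would exhibit a strictly later occurrence, which then lets me build an infinite sequence of occurrences and conclude recurrence.

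The main obstacle is precisely making this iteration rigorous in the partial-word setting, where ``occurrence'' means compatibility rather than equality. To find a later occurrence of $u$ starting beyond a given position $j$, I would consider a long prefix of $w$ that already contains the chosen two occurrences of $u$, and then appeal to hypothesis $(3)$ applied not to $u$ itself but to a suitably extended subword. The cleanest route is to show that if some subword occurred only finitely often, then there would be a longest prefix-factor containing all its occurrences; extending that factor by one symbol on the right produces a subword $v$ whose occurrences must each contain the final occurrence of $u$, and iterating the ``at least twice'' hypothesis on longer and longer such $v$ forces a contradiction. Concretely, I would argue by contradiction: assume some $u$ occurs only finitely often, take the last occurrence, and derive that a certain longer subword can occur at most once, contradicting $(3)$.

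For $(2)\Rightarrow(3)$ I would show that every subword of $w$ is in fact compatible with some factor, and that the condition on prefixes propagates to all subwords. The idea is that any subword $u \lhd w$ occurs at some position $j$; by considering the prefix $w[0..j+n)$, which is a finite prefix of $w$, and using that every subword compatible with a finite prefix occurs at least twice, I would transfer the ``at least twice'' guarantee from prefix-anchored subwords to arbitrary subwords. The technical point to watch is that a subword compatible with a factor deep inside $w$ need not literally be a prefix; the argument must exploit that any factor is contained in a long enough prefix, so that subwords of $w$ and subwords compatible with finite prefixes coincide as sets. Once this set-equality is observed, $(2)\Rightarrow(3)$ becomes formal.

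I expect the delicate step to be $(3)\Rightarrow(1)$, where the gap between ``at least twice'' and ``infinitely often'' must be closed; the right lemma to isolate is that in any infinite word where every subword occurs at least twice, every occurrence of a subword can be extended to a longer subword that still occurs at least twice, giving an infinite regress of occurrences at unbounded positions. I would phrase this via the following contrapositive: if $u$ occurred only finitely many times, let $N$ bound all its occurrence positions; then the factor $w[0..N+n)$ witnesses a subword $v$ (the full content up to and including the last occurrence of $u$) that cannot recur without forcing a further occurrence of $u$ beyond $N$, the desired contradiction. Throughout, the compatibility relation $\uparrow$ replaces equality, but since compatibility is preserved under taking prefixes and factors, the classical full-word arguments carry over with only notational changes.
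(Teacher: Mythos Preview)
Your central argument---assume a subword $u$ has a last occurrence at position $i$, complete the prefix $w[0..i+n)$ to a full word $\hat u$ with $\hat u[i..i+n)=u$, and use the ``at least twice'' hypothesis on $\hat u$ to force a later occurrence of $u$---is correct and is exactly what the paper does. The difference is organizational, and your organization introduces a genuine error in the $(2)\Rightarrow(3)$ step.

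Your claim that ``subwords of $w$ and subwords compatible with finite prefixes coincide as sets'' is false. Condition~(2) concerns full words $u$ with $u\uparrow w[0..|u|)$; it does \emph{not} mean ``compatible with some factor of a finite prefix.'' For example, if $w=ab^\omega$ then $bb\in\Sub(w)$ but $bb$ is not compatible with any prefix of $w$. So the set equality you invoke to make $(2)\Rightarrow(3)$ ``formal'' does not hold, and that step as written fails. The implication $(2)\Rightarrow(3)$ is true, but to prove it directly you would have to take $u\lhd w[j..j+n)$, complete the full prefix $w[0..j+n)$ to a word $v$ of length $j+n$ containing $u$ at position $j$, apply (2) to $v$, and read off a second occurrence of $u$---which is precisely the argument you already give for $(3)\Rightarrow(1)$, so you end up doing the real work twice.

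The paper avoids this by running the cycle the other way: $(1)\Rightarrow(3)$ and $(1)\Rightarrow(2)$ are immediate, $(3)\Rightarrow(2)$ is trivial because any prefix-compatible word is in particular a subword, and then $(2)\Rightarrow(1)$ is proved by the prefix-completion contradiction. This is cleaner because $(2)$ is the weakest hypothesis, so one should climb from $(2)$ directly to $(1)$ rather than first upgrading $(2)$ to $(3)$.
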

%\begin{comment}
\begin{proof}
	It is clear that $(1)$ implies both $(2)$ and $(3)$, whereas $(3)$ implies $(2)$ since any subword compatible with a finite prefix of $w$ is itself a subword of $w$. To show that $(2)$ implies $(1)$, for the sake of contradiction suppose some word $v\in \Sub_w(n)$ appeared only finitely many times in $w$. Suppose the last occurrence of $v$ starts at position $i$. Then for all $j>i$, $v$ is not compatible with $w[j..j+n)$. Now let $\hat{u}$ be a completion of the prefix of length $i+n$ of $w$ such that $\hat{u}[i..i+n) = v$. Then by $(2)$,  $\hat{u}$ must appear at least twice in $w$. In particular, there exists some position $j>0$ such that $\hat{u} \uparrow w[j..j+i+n)$. But then $v \uparrow w[j+i..j+i+n)$, contradicting the fact that the last occurrence of $v$ started at position $i$. Hence, every subword of $w$ must appear infinitely many times. 
\end{proof}
%\end{comment}

\begin{theorem}
	\label{thm:rec_up} If $w$ is an infinite recurrent partial word with a positive but finite number of holes, then $w$ is not ultimately periodic. 
\end{theorem}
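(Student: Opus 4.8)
The plan is to argue by contradiction: assume $w$ is ultimately periodic and derive a conflict with recurrence using the fact that $w$ has a hole. First I would exploit that the number of holes is finite, together with ultimate periodicity, to pin down the structure of the tail. Writing $w$ in the form $uv$ with $v$ periodic, I can choose an index $N$ and take $p$ to be the \emph{minimal period} of the tail so that every position $i \geq N$ is defined (no hole occurs that far out) and lies in the periodic part; since a letter is compatible with $a_j$ only if it equals $a_j$, this forces $w(i) = a_{i \bmod p}$ for all $i \geq N$. In other words, the tail $w(N)w(N+1)\cdots$ is a genuine full periodic word with minimal period $p$.

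Next I would locate a hole to play off against this rigid tail. Let $k$ be the position of the last hole; it exists because there is at least one hole, and $k < N$ because every position $\geq N$ is defined. Using that $A$ has at least two letters, choose $b \in A$ with $b \neq a_{k \bmod p}$. Now form the subword $s$ obtained from the factor $w[k..M]$, with $M = N + p - 1$, by filling the hole at position $k$ with $b$. Since $k$ is the last hole, all positions from $k+1$ through $M$ are already defined, so no other completion choices are needed and $s$ is a bona fide element of $\Sub_w(M-k+1)$ satisfying $s(0) = b$; moreover $s$ contains a block of length at least $p$ (the positions that correspond to $w$-positions $N$ through $M$) agreeing with the periodic pattern.

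Then I would invoke recurrence, via Proposition~\ref{pro:recurrence}: since $s$ occurs infinitely often and only finitely many starting positions are below $N$, it occurs at some position $j \geq N$, where the entire window lies inside the full periodic tail and hence $w(j+i) = s(i)$ for all $i$ in range. The crux---and the step I expect to be the main obstacle---is a phase-locking argument. Comparing the periodic block of $s$ at its two appearances (at $w$-positions $N$ through $M$ and at $w$-positions $j+N-k$ through $j+M-k$, all of which lie beyond $N$) yields $a_{(j+i) \bmod p} = a_{(k+i) \bmod p}$ as $i$ ranges over an interval of length at least $p$, so the residues $(k+i) \bmod p$ cover all of $\{0,\ldots,p-1\}$; minimality of the period $p$ then forces $j \equiv k \pmod p$. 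Consequently $w(j) = a_{j \bmod p} = a_{k \bmod p}$, which contradicts $w(j) = s(0) = b \neq a_{k \bmod p}$. This contradiction shows that $w$ cannot be ultimately periodic.

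The delicacy of the argument is concentrated entirely in that last paragraph: I must make sure the periodic context inside $s$ is long enough (length at least $p$) and that $p$ is chosen to be the \emph{minimal} period, since only then does a single sufficiently long factor determine its phase uniquely modulo $p$, making the mismatched hole letter $b$ detectable. The conversions from compatibility to equality in the full tail, and the observation that $k < N$, are routine once the tail has been shown to be fully periodic.
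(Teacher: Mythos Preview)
Your proof is correct and follows the same strategy as the paper: fill the last hole with a letter incompatible with the periodic pattern, append a periodic block of length at least the minimal period, use recurrence to locate this subword inside the full periodic tail, and contradict the minimality of the period. Your closing phase-locking step (minimality of $p$ forces $j\equiv k\pmod p$, whence a direct letter mismatch at position $j$) is simply the contrapositive of the paper's finish (the letter mismatch forces a nontrivial factorisation $y=y_1y_2=y_2y_1$, whence a smaller period via the commuting-words lemma), so the two arguments are essentially the same.
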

%\begin{comment}
\begin{proof}
	For the sake of contradiction, suppose $w$ is ultimately periodic. Then we can write $w = xyyy\cdots$ where $y$ is a finite full word such that $|y|$ is the minimal period of $w$. Let $j$ be the position of the last hole in $x$. Let $z = a x[j+1..|x|)y^n = a v y^n$ where $n \geq |y|$ and the letter $a$ is chosen so that $a \neq y(j')$, where $j' = |y|-1-|v|\bmod{|y|}$.
	Since $w$ is recurrent and $z$ is a subword of $w$,  $z$ occurs infinitely many times in $w$. In particular, it occurs somewhere in $u = y^\omega$, where $|y|$ is the minimal period of $u$. Thus, there exists $i \in \{0,\ldots,|y|-1\}$ such that 
	$u(i)\cdots u(i+|z|-1) = z$.
		Since $y(i)=a \neq y(j')$, we have $i \neq j'$.
		
		Set $i' = (i + |v| + 1)\bmod{|y|}$, $y_1=y(0)\cdots y(i'-1)$, and $y_2=y(i')\cdots y(|y|-1)$. We get $y=y_1y_2=y_2y_1$, and so $y_1$ and $y_2$ are powers of a common word $y'$. Thus $y^{|y'|} = (y')^{|y|}$.
	However, $1 \leq |y'| < |y|$. Then $u = y^\omega = (y^{|y'|})^\omega = ((y')^{|y|})^\omega = (y')^\omega$ is $|y'|$-periodic, which contradicts the minimality of period $|y|$. 
\end{proof}
%\end{comment}

To extend the above theorem to the case where $w$ has infinitely many holes we must introduce some additional restrictions. We would like to impose some constraints on the number of holes and their distribution inside $w$. The motivation for these is the fact that any infinite partial word with a large number of holes exhibits a behavior similar to the one of the trivial partial word $w=\d^\omega$, which is recurrent and periodic.

Next we define the gap function which quantifies the spacing between consecutive appearances of the hole symbol in a partial word. Let $H(n)-1$ be the position of the $n$th hole in an infinite partial word $w$ (we also say that $H(n)$ is the \emph{hole function} of $w$). Then let $h(n)=H(n)-H(n-1)$, for $n\geq2$, be defined as the \emph{gap function} of $w$. For example, the infinite partial word 
\[\d \d a\d a \d aaa \d aaaaa \d aaaaaaaaaaa \d  aaaaaaaaaaaaaaaaaaaa \d \cdots \]
has holes at positions $H(n)-1 = \lceil 2^{4(n-1)/5}\rceil -1$ and the distance between the 5th and 6th holes is $h(6)=H(6)-H(5)=16-10=6$. This is actually an example of an infinite partial word (regarded as a partial word over the alphabet $\{a, b\}$) having a complexity function not achievable by any full word.

\begin{corollary}\label{cor:rec_up}
	Let $w$ be a recurrent partial word with infinitely many holes for which there exists $N>0$ such that $h(n)<h(n+1)$ for all $n\geq N$. Then $w$ is not ultimately periodic.
\end{corollary}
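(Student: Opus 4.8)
The plan is to argue by contradiction, in the same spirit as the proof of Theorem~\ref{thm:rec_up}, by exhibiting a subword that recurrence forces to appear infinitely often yet whose shape can only be realized finitely often once the gaps between consecutive holes grow without bound. Assume $w$ is ultimately periodic. Since only finitely many holes lie in the finite prefix $u$ of a decomposition $w=uv$ with $v$ periodic, the periodic tail carries infinitely many holes; let $p$ be the minimal period of $v$. After absorbing the offset $|u|$ into a cyclic rotation of the period word, I may assume there is an index $M_0$ and a full word $a_0\cdots a_{p-1}$, which is primitive because $p$ is minimal, such that $w(i)\uparrow a_{i\bmod p}$ for every $i\geq M_0$; every hole beyond $M_0$ is then a genuine hole of the periodic part.

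Next I would observe that the hypothesis $h(n)<h(n+1)$ for $n\geq N$ forces $h(n)\to\infty$, and record the two consequences this has. First, for any fixed length $g$ there are only finitely many pairs of holes of $w$ at distance exactly $g$: if holes sit at positions $p_1<p_2<\cdots$ then $p_b-p_a\geq p_{a+1}-p_a=h(a+1)$, which exceeds $g$ for all but finitely many $a$, leaving at most one partner $b$ for each of the finitely many remaining $a$. Second, for every $g$ there is a threshold beyond which every window of length $g-1$ meets at most one hole, since two holes inside such a window would lie at distance at most $g-2$, below the consecutive gap. With these in hand I would fix a large hole index $n\geq N$ whose two holes $q:=H(n)-1$ and $q':=H(n+1)-1$ both lie past $M_0$ and whose gap $g:=h(n+1)=q'-q$ satisfies $g\geq 2p+2$; such $n$ exists because the gaps diverge.

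I then form the candidate subword $z=b_1c_1\cdots c_{g-1}b_2$, where $c_i=a_{(q+i)\bmod p}$ are the (defined) letters strictly between the two holes, and $b_1,b_2\in A$ are chosen with $b_1\neq a_{q\bmod p}$ and $b_2\neq a_{q'\bmod p}$, possible as $|A|\geq 2$. Since $z\uparrow\d c_1\cdots c_{g-1}\d=w[q..q']$ we have $z\lhd w$, so recurrence makes $z$ occur infinitely often. The core of the proof is to show each occurrence, at a position $t$ beyond the threshold above, must align $b_1,b_2$ with actual holes at distance $g$. The matching of the middle block gives $a_{(q+i)\bmod p}=a_{(t+i)\bmod p}$ for every $i\in\{1,\dots,g-1\}$ at which $w(t+i)$ is defined, and at most one $i$ is exempt because the window meets at most one hole; since $g\geq 2p+2$ this leaves a run of $p$ consecutive indices, so the rotation by $(t-q)\bmod p$ fixes the primitive word $a_0\cdots a_{p-1}$ and hence $t\equiv q\pmod p$. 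Phase-locking in hand, $w(t)\uparrow b_1$ with $b_1\neq a_{t\bmod p}$ forces $w(t)=\d$, and likewise $w(t+g)=\d$, producing a pair of holes at distance exactly $g$. As only finitely many such pairs exist, $z$ occurs only finitely often, contradicting recurrence, so $w$ cannot be ultimately periodic.

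The step I expect to be the main obstacle is exactly this phase-locking argument: a priori a shifted occurrence of $z$ could align a deviant letter $b_j$ with a defined position whose periodic letter happens to equal $b_j$, so the construction would prove nothing. Controlling this requires both the primitivity of the period word (so that a full period of agreement pins the phase to $0$) and the growth of the gaps (so that, for occurrences far enough out, the sliding window of length $g-1$ can hide at most one disagreement); the choice $g\geq 2p+2$ is what guarantees a surviving run of $p$ consecutive agreements after discarding that single exception.
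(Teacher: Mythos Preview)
Your argument is correct and takes a genuinely different route from the paper's. The paper, following the template of Theorem~\ref{thm:rec_up}, completes two holes lying in a \emph{prefix} of $w$ with letters that clash with the periodic pattern, appends two full period blocks $y_jy_{j+1}$, and uses recurrence to push this subword into a region where at most one hole is available; the surviving mismatch then forces a nontrivial offset between the appended blocks and the ambient periodic tail, and a Fine--Wilf style commutation $y=y_1y_2=y_2y_1$ produces a strictly smaller period, contradicting minimality. You instead select two holes already deep in the periodic tail with a large gap $g\geq 2p+2$, fill both endpoints ``wrong,'' and use primitivity of the period word directly to phase-lock every far occurrence to the residue class of $q$ modulo $p$; this forces both endpoints of any such occurrence to be holes at distance exactly $g$, and since strictly increasing gaps leave only finitely many such hole pairs, recurrence is contradicted outright. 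Your route sidesteps the commutation/Fine--Wilf step entirely and makes the role of primitivity explicit, at the price of the phase-locking bookkeeping (the bound $g\geq 2p+2$ guaranteeing a run of $p$ consecutive agreements after excising the one allowable hole in the middle block); the paper's route recycles the machinery of Theorem~\ref{thm:rec_up} almost verbatim and gets by with the milder requirement $h(n)>3p$.
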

%\begin{comment}
\begin{proof}
	For the sake of contradiction, suppose $w$ is ultimately periodic. Then we can write $w=xy_1y_2\cdots$, where for all $i,j>0$, $y_i$ and $y_j$ are compatible factors of length $p$ with $p$ being the minimal period. We will refer to $y_1, y_2, \ldots$ as the $y$ factors. By choosing sufficiently large $n\geq3$, we can ensure that $h(n)>3p$. Thus, there exists $j>p$ such that both $y_j$ and $y_{j+1}$ are full words. 
	Let $v=xy_1y_2\cdots y_{j-1}$. Then $v$ contains at least two holes. Without loss of generality, assume that $v(l)=v(l')=\d$, for some $l<l'$.

	Let $i_l = (p-|v|+l)\bmod{p}$ and $i_{l'} = (p-|v|+l')\bmod{p}$. Then choose a completion $\hat{v}$ of $v$ such that $\hat{v}(l)\neq y_j(i_l)$ and $\hat{v}(l')\neq y_j(i_{l'})$. Let $u=\hat{v}y_jy_{j+1}$ and $m$ be sufficiently large so that $h(m)>2|u|$. Since $w$ is recurrent, the subword $u$ must occur at some position to the right of $H(m)-1$. So suppose it occurs at position $i$. Then if we let $z=w[i..i+|u|)$ then $z$ contains at most one hole. By the choice of $i_l$ and $i_{l'}$,  at least one of $\hat{v}(l)$ or $\hat{v}(l')$ is incompatible with the corresponding symbol in $z$. Thus the $y$ factors in $u$ cannot align with the $y$ factors in $z$. Also, at least one of the $y$ factors in $z$ is full.
 Analogous to the proof of Theorem~\ref{thm:rec_up}, we conclude that $y_j\cdots y_{j+p-1}$ is periodic with period $p'<p$, where $p'$ is the length of the offset. This contradicts the minimality of $p$ and, therefore, no ultimately periodic words with the desired property exist. 
\end{proof}
%\end{comment}

Let $w$ be an infinite partial word. We define $R_w(n)$, the \emph{recurrence function} of $w$, to be the smallest integer $m$ such that every factor of length $m$ of $w$ contains at least one occurrence of every subword of length $n$ of $w$. The following theorem extends a well-known result on full words to partial words (see \cite{DBLP:books/daglib/0025558}).

\begin{theorem}
\label{thm:theorem4}
	 Let $w$ be a uniformly recurrent infinite partial word. Then the following hold: 
	\begin{enumerate}
		\item $R_w(n+1)>R_w(n)$ for all $n\geq 0$; 
		\item If for each $n > 0$ there exists an index $i$ such that $w[i..i+n)$ is a full word then $R_w(n) \geq p_w(n)+n-1$ for all $n\geq 0$; 
		\item If $w$ has a positive finite number of holes or an eventually increasing gap function, then $R_w(n)\geq 2n$ for all $n \geq 0$.
	\end{enumerate}
\end{theorem}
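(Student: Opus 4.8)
The plan rests on two preliminary observations. First, uniform recurrence guarantees that $R_w(n)$ is finite for every $n$: by Theorem~\ref{thm:Fer}(4) there are only $p_w(n)\le k^n$ subwords of length $n$, each occurring in every sufficiently long factor, so the maximum of these finitely many thresholds is a valid $m$. Second, I would record a right-extension lemma: every $u\in\Sub_w(n)$ is a prefix of some $u'\in\Sub_w(n+1)$, since if $u\uparrow w[i..i+n)$ then appending $w(i+n)$ (or, if it is a hole, any letter) yields an $(n+1)$-subword with prefix $u$. For Part~(1) I then fix an arbitrary factor $f=w[i..i+m-1)$ of length $m-1$, where $m=R_w(n+1)$, and extend it on the right to $g=w[i..i+m)$, which is legitimate because $w$ is right infinite. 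By definition of $R_w(n+1)$, $g$ contains every $(n+1)$-subword. Given $u\in\Sub_w(n)$, extend it to $u'\in\Sub_w(n+1)$; an occurrence of $u'$ in $g$ starts at some $k$ with $i\le k\le i+m-n-1$, so the induced occurrence of $u$ ends at position $k+n-1\le i+m-2$ and lies inside $f$. Hence every factor of length $m-1$ contains all of $\Sub_w(n)$, giving $R_w(n)\le m-1<R_w(n+1)$.

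For Part~(2) I would apply the hypothesis to the positive integer $R_w(n)$ itself (note $R_w(n)\ge n\ge 1$), obtaining a full factor $F$ of length $R_w(n)$. Since $|F|=R_w(n)$, $F$ contains every $u\in\Sub_w(n)$; but $F$ is full, so $u\uparrow F[j..j+n)$ forces $u=F[j..j+n)$. Thus every subword of length $n$ coincides with one of the $R_w(n)-n+1$ length-$n$ windows of $F$, and counting windows yields $p_w(n)\le R_w(n)-n+1$, which is the claim.

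For Part~(3) I first observe that both hypotheses supply full factors of every length: an infinite full suffix when there are finitely many holes, and arbitrarily large inter-hole gaps when the gap function eventually increases. Hence the hypothesis of Part~(2) holds, and it suffices to show $p_w(n)\ge n+1$ for all $n$; then $R_w(n)\ge p_w(n)+n-1\ge 2n$. Equivalently and more directly, a full factor of length $2n-1$ has only $n$ length-$n$ windows, so if $p_w(n)>n$ it must miss some subword, forcing $R_w(n)\ge 2n$. By the contrapositive of Theorem~\ref{thm:Fer}(3), the bound $p_w(n)\ge n+1$ for all $n$ is equivalent to $p_w$ being unbounded, so the whole of Part~(3) reduces to proving unboundedness of the complexity.

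The crux is exactly this unboundedness. When $w$ has finitely many holes, write $w=uv$ with $v$ an infinite full word; $v$ cannot be ultimately periodic, for otherwise $w$ would be, contradicting Theorem~\ref{thm:rec_up}, so $p_v(n)\ge n+1$ by the full-word Morse--Hedlund theorem, and $p_w(n)\ge p_v(n)$ because $\Sub_v(n)\subseteq\Sub_w(n)$. The eventually-increasing-gap case is the main obstacle: there is no full suffix, and (unlike for full words) periodicity does not bound complexity, as $\d^\omega$ shows. Here I would argue that bounded complexity would force the arbitrarily long full factors to contain, via a finite Morse--Hedlund argument, periodic stretches of some uniformly bounded period; uniform recurrence together with the pigeonhole principle would then single out one period recurring throughout $w$, making $w$ ultimately periodic and contradicting Corollary~\ref{cor:rec_up}. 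Converting this into a clean deduction that $w$ itself (holes included) is ultimately periodic is the delicate point of the argument.
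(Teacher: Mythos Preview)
Your treatments of Parts~(1) and~(2) are correct and coincide with the paper's; you have merely written out Part~(1) in detail where the paper defers to the classical full-word argument. Your reduction of Part~(3) to the inequality $p_w(n)\ge n+1$, together with the observation that the hypotheses of~(3) imply those of~(2), also matches the paper.

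The gap is in the eventually-increasing-gap case of Part~(3), where you manufacture an unnecessary obstacle. You attempt to show directly that bounded complexity would force periodic stretches in the long full factors and then piece these together into ultimate periodicity of $w$---and you rightly flag this as delicate. The paper avoids this entirely: it applies Theorem~\ref{thm:rec_up} and Corollary~\ref{cor:rec_up} (uniform recurrence implies recurrence) to conclude at once that $w$ is not ultimately periodic, and then invokes Theorem~\ref{thm:Fer}(3) to get $p_w(n)\ge n+1$. The step hidden in that citation, and the one that dissolves your ``delicate point'', is that for \emph{any} infinite partial word, boundedness of $p_w$ already forces ultimate periodicity: if $p_w$ is bounded then so is $p_{\hat w}$ for any completion $\hat w$, so $\hat w$ is ultimately periodic by the classical Morse--Hedlund theorem; since $w$ is compatible with this ultimately periodic full word, $w$ itself is ultimately periodic by the definition in Section~\ref{sec:preliminaries}. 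Thus ``$w$ not ultimately periodic'' immediately gives ``$p_w$ unbounded'', and Theorem~\ref{thm:Fer}(3) then yields $p_w(n)\ge n+1$. No analysis of periodic stretches inside full factors is needed; your finite-hole argument already essentially does this (via the full suffix), and the same completion trick handles the infinite-hole case uniformly.
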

%\begin{comment}
\begin{proof}
	 The proof of  $(1)$ is identical to that for full words. For  $(2)$, let $n\geq 0$ and set $m=R_w(n)$. Then there exists an index $i$ such that $v= w[i..i+m)$ is a full word. Since $|v|=m$, $v$ contains every subword of $w$ of length $n$. But any full word of length $m$ contains at most $m-n+1$ distinct subwords of length $n$. Hence, $p_w(n)\leq m-n+1$. Therefore, $R_w(n) \geq p_w(n)+n-1$ for all $n\geq 0$.
For $(3)$, note that the conditions on $w$ together with Theorem \ref{thm:rec_up} and Corollary \ref{cor:rec_up} imply that $w$ is not ultimately periodic. Thus by Theorem~\ref{thm:Fer}(3), $p_w(n)\geq n+1$ for all $n\geq 0$. Since $(3)$ implies $(2)$, we get $R_w(n) \geq p_w(n)+n-1 \geq 2n$ for all $n\geq 0$. 
\end{proof}
%\end{comment}

The following theorem captures the fact that a uniformly recurrent word cannot achieve maximal complexity.
\begin{theorem}
\label{cannotachieve}
	 Let $w$ be a uniformly recurrent infinite word. Then there exists $N$ such that $p_w(n)<k^n$ for all $n\geq N$, where $k$ is the alphabet size. 
\end{theorem}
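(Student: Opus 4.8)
The plan is to produce a single length at which $w$ fails to be of maximal complexity, and then let the asymptotic part of Theorem~\ref{thm:Fer}(4) propagate this to all large $n$. Concretely, I would exhibit an integer $m$ with $p_w(m) < k^m$; since $p_w(n) \le k^n$ always holds, finding one such $m$ lets me invoke the second half of Theorem~\ref{thm:Fer}(4) to obtain a real number $\kappa < k$ with $p_w(n) \le \kappa^n$ for all sufficiently large $n$, and then $\kappa^n < k^n$ yields the desired threshold $N$.

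To locate $m$, I would use uniform recurrence applied to a single letter. Since $w$ is infinite it contains at least one letter $a$ occurring in it, so $a \in \Sub_w(1)$. By uniform recurrence there is an integer $m$ such that every factor of $w$ of length $m$ has $a$ as a subword, i.e.\ the letter $a$ occurs somewhere in every window of length $m$. Because $A$ has $k \ge 2$ letters, I may fix a letter $b \neq a$ and consider the monochromatic block $b^m$. As $w$ is a full word, any factor of length $m$ compatible with $b^m$ must equal $b^m$, and such a factor contains no occurrence of $a$. By the choice of $m$ this is impossible, so $b^m \notin \Sub_w(m)$, whence $p_w(m) \le k^m - 1 < k^m$.

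Feeding $n_0 = m$ into Theorem~\ref{thm:Fer}(4) then completes the argument as described in the first paragraph. The crux is the observation that uniform recurrence forbids arbitrarily long blocks avoiding a fixed occurring letter; everything else is bookkeeping. The only points needing care are that $w$ genuinely uses some letter $a$ (immediate, since $w$ is infinite) and that a distinct letter $b$ exists (this is precisely where the standing hypothesis $k \ge 2$ on $A$ is used). I do not expect a serious obstacle beyond correctly matching the ``full word'' hypothesis to the step where compatibility becomes equality, which is what guarantees that $b^m$ is genuinely absent rather than merely hidden behind holes.
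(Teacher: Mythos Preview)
Your proof is correct and follows essentially the same approach as the paper: reduce via Theorem~\ref{thm:Fer}(4) to exhibiting a single length of non-maximal complexity, then use uniform recurrence of a single letter to rule out a monochromatic block of another letter. The only cosmetic difference is that the paper splits off the trivial case $p_w(1)<k$ and argues the main case by contradiction (assuming $a^{R_w(1)}$ is a subword and deriving that some $b\neq a$ must occur in it), whereas you argue directly that $b^m$ is absent; your formulation is slightly cleaner since it avoids the case split.
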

%\begin{comment}
\begin{proof}
	 By Theorem~\ref{thm:Fer}(4), we only need to show that $p_w(n)<k^n$ for some $n$. We split the proof into two cases. If $p_w(1)<k$ then we are done. Thus suppose $p_w(1)=k$. Then let $t=R_w(1)$. For the sake of contradiction, suppose $w$ achieves maximal complexity, that is, $p_w(n)=k^n$ for all $n\geq 0$. Then $w$ contains the subword $a^t$, where $a\in A$. Hence, $|a^t|=t=R_w(1)$ implies $b\lhd a^t$ for some $b \in A, b \neq a$, which is a contradiction.
\end{proof}
%\end{comment}

It is natural to extend the above theorem to partial words with finitely many holes.

\begin{corollary}
	\label{cor:recurrence_complexity} Let $w$ be a uniformly recurrent infinite partial word with finitely many holes. Then there exists $N$ such that $p_w(n)<k^n$ for all $n\geq N$, where $k$ is the alphabet size.
\end{corollary}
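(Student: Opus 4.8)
The plan is to reduce the statement to the full-word version already proved in Theorem~\ref{cannotachieve}. Since $w$ has only finitely many holes, let $L$ be the position of the last hole and set $w' = \sigma_{L+1}(w)$, the suffix of $w$ that begins just after the final hole; by construction $w'$ is an \emph{infinite full word}. The heart of the argument is to show that $w$ and $w'$ have exactly the same subwords, so that $p_w(n) = p_{w'}(n)$ for every $n$, and then to invoke the full-word result on $w'$.

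First I would establish that $\Sub(w) = \Sub(w')$. The inclusion $\Sub(w') \subseteq \Sub(w)$ is immediate, since every factor of $w'$ is a factor of $w$ and $w'$ is full. For the reverse inclusion, take any $u \in \Sub_w(n)$. Because $w$ is uniformly recurrent it is in particular recurrent, so $u$ occurs infinitely often in $w$; in particular $u$ occurs at some position $j > L$. Since $w[j..j+n)$ lies entirely to the right of the last hole it is a full word, and $u \uparrow w[j..j+n)$ forces $u = w[j..j+n)$. Hence $u$ is a factor of $w'$, i.e.\ $u \in \Sub_{w'}(n)$. Slicing by length gives $\Sub_w(n) = \Sub_{w'}(n)$, and therefore $p_w(n) = p_{w'}(n)$ for all $n$.

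Next I would check that $w'$ is itself uniformly recurrent. Given $u \in \Sub(w') = \Sub(w)$, uniform recurrence of $w$ provides an $m$ such that every length-$m$ factor of $w$ contains $u$; since each length-$m$ factor of $w'$ is also a length-$m$ factor of $w$, the same $m$ witnesses the recurrence of $u$ in $w'$. Thus $w'$ is a uniformly recurrent infinite full word, and Theorem~\ref{cannotachieve} yields an $N$ with $p_{w'}(n) < k^n$ for all $n \geq N$. Combining this with $p_w(n) = p_{w'}(n)$ finishes the proof.

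The main thing to get right is the equality $\Sub(w) = \Sub(w')$: this is precisely the step where recurrence is indispensable. Without it, a partial word could carry subwords that arise only from completing factors straddling the holes and that never reappear in the full tail, so the reduction to the full-word case would break down. Uniform recurrence is used twice here --- once (as plain recurrence) to push every subword past the last hole, and once in full strength to inherit uniform recurrence for $w'$ so that Theorem~\ref{cannotachieve} applies. Notably, no delicate counting of the hole-bearing factors near the start of $w$ is needed once the two subword sets are shown to coincide.
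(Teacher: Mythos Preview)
Your proof is correct and follows essentially the same approach as the paper: shift past the last hole, identify the subword sets of $w$ and the full tail, and apply Theorem~\ref{cannotachieve}. You are in fact more careful than the paper, which asserts $\Sub(w)=\Sub(v)$ from uniform recurrence without argument and does not explicitly verify that the tail is uniformly recurrent before invoking Theorem~\ref{cannotachieve}.
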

%\begin{comment}
\begin{proof}
	 Choose $N$ such that $j\geq N$ implies $w(j)\neq \d$. Then let $v=\sigma_N(w)$. Then uniform recurrence implies that $\Sub(w)=\Sub(v)$. Hence, $p_w(n)=p_v(n)$ and thus Theorem~\ref{cannotachieve} gives us the result. 
\end{proof}
%\end{comment}

To extend the result to partial words with infinitely many holes we must introduce some additional restrictions. In essence too many holes still allows us to achieve maximal complexity. A trivial example is $w=\d^{\omega}$.
 
\begin{corollary}
	\label{cor:recurrence_complexity2} Let $w$ be a uniformly recurrent infinite partial word for which there exists $n_0$ such that $n\geq n_0$ implies $h(n)\leq h(n+1)$ and $\lim_{n \to \infty}h(n+1)-h(n) = \infty$. Then there exists $N>0$ such that $p_w(n)<k^n$ for all $n \geq N$, where $k$ is the alphabet size. 
\end{corollary}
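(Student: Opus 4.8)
The plan is to mimic the reduction used in Corollary~\ref{cor:recurrence_complexity}, replacing the ``shift past all holes'' trick (which fails when there are infinitely many holes) by the observation that the gap hypothesis forces arbitrarily long \emph{hole-free} factors. As in Theorem~\ref{cannotachieve}, I would first dispose of the easy case: if $p_w(1)<k$ then Theorem~\ref{thm:Fer}(4) immediately yields the claim, so I may assume $p_w(1)=k$ and, arguing by contradiction, that $w$ achieves maximal complexity, i.e.\ $p_w(n)=k^n$ for all $n$. Set $t=R_w(1)$, which is finite since $w$ is uniformly recurrent.

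Next I would extract a long full factor. From $h(n)\le h(n+1)$ for $n\ge n_0$ together with $\lim_{n\to\infty}\bigl(h(n+1)-h(n)\bigr)=\infty$ it follows that $h(n)\to\infty$: the gaps are eventually nondecreasing with unbounded increments, so they grow without bound. Since the positions strictly between the $(n-1)$st and $n$th holes are all defined, each gap of size $h(n)$ yields a full factor of length $h(n)-1$. Hence $w$ contains full factors of every sufficiently large length.

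The core step combines maximal complexity with uniform recurrence. Fix any letter $a\in A$. Maximal complexity gives $a^t\in\Sub_w(t)$, and uniform recurrence then provides an integer $m$ (for instance $m=R_w(t)$) such that every factor of $w$ of length $m$ contains $a^t$ as a subword. By the previous paragraph, choose a full factor $v$ of $w$ with $|v|=m$. Because $v$ is full, compatibility is equality, so $a^t$ occurs in $v$ as a genuine factor; in particular $w$ has a full factor equal to $a^t$, a constant run of length $t=R_w(1)$. Now the definition of the recurrence function forces this length-$t$ factor to contain every subword of length $1$, and since $p_w(1)=k\ge 2$ there is some $b\in A$ with $b\ne a$ and $b\lhd a^t$. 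This is impossible, as the only length-$1$ subword of $a^t$ is $a$. The contradiction shows $p_w(n)<k^n$ for some $n$, and Theorem~\ref{thm:Fer}(4) upgrades this to all sufficiently large $n$.

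The only delicate point is manufacturing a \emph{full} occurrence of the constant subword $a^t$ despite $w$ having infinitely many holes; this is exactly what the growing-gap hypothesis buys us, guaranteeing hole-free windows long enough to contain an occurrence forced in by uniform recurrence. It is worth noting that the argument uses only $h(n)\to\infty$, which the stated hypotheses imply but which is weaker than what is assumed.
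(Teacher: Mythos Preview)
Your argument is correct and is exactly the adaptation the paper has in mind when it says the proof is ``very similar'' to that of Theorem~\ref{cannotachieve}: the only new ingredient needed in the partial-word setting is to manufacture a genuinely full occurrence of $a^t$, and you do this cleanly by using the gap hypothesis to obtain a full factor of length $R_w(t)$ in which the forced occurrence of $a^t$ must be literal. Your closing remark that only $h(n)\to\infty$ is actually used is also accurate.
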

%\begin{comment}
\begin{proof}
	 The proof is very similar to that of Theorem~\ref{cannotachieve}. %By Theorem~\ref{thm:Fer}(4), we only need to show $p_w(n)<k^n$ for some $n$. Here $p_w(1)=k$, and let $t=R_w(1)$. Since the gap function of $w$ is eventually increasing, there exists $M$ such that $n\geq M$ implies $H(n)-H(n-1)>4t$. 
	
	%For the sake of contradiction, suppose $p_w(n)=k^n$ for all $n\geq0$. Then the subword $a^{3t}$ occurs infinitely often in $w$, where $a \in A$. In particular, it occurs starting at some position $i>M$. Since $H(n)-H(n-1)>4t$, $w[i..i+3t)$ contains at most one hole. Thus, we see that $a^t$ is a factor of $w$. Since $|a^t|=t=R_w(1)$, we get $b\lhd a^t$, where $b \in A, b \neq a$, a contradiction. 
\end{proof}
%\end{comment}

The following result illustrates the relationship between a recurrent partial word and its completions.

\begin{proposition}
\label{pro:prop2}
	Let $w$ be an infinite partial word having a finite number of holes or
				an eventually increasing gap function.
	Then $w$ is recurrent if and only if every completion $\hat{w}$ is recurrent.
\end{proposition}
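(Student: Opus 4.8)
The plan is to prove the two implications separately, and to note at the outset that only the forward implication ``$w$ recurrent $\Rightarrow$ every completion recurrent'' uses the hypothesis on the holes. For the reverse implication, suppose every completion of $w$ is recurrent and let $u \in \Sub_w(n)$, so that $u \uparrow w[i..i+n)$ for some $i$. First I would build a completion $\hat w$ that agrees with $u$ on the window $[i,i+n)$, that is, fill each hole of $w$ inside that window with the corresponding letter of $u$ (the defined positions already match because $u \uparrow w[i..i+n)$) and fill the remaining holes arbitrarily. Then $u = \hat w[i..i+n)$ is a genuine factor of $\hat w$, hence recurs in $\hat w$, and each occurrence $\hat w[j..j+n) = u$ gives $u \uparrow w[j..j+n)$ because $\hat w$ completes $w$; thus $u$ occurs infinitely often in $w$, and $w$ is recurrent. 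No restriction on the holes is needed here.

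For the forward implication, fix a completion $\hat w$ and a factor $u = \hat w[i..i+n)$; I must show that $u$ recurs in $\hat w$. The guiding observation is that a compatible occurrence $u \uparrow w[j..j+n)$ whose window $w[j..j+n)$ is a \emph{full} word is automatically a genuine occurrence $\hat w[j..j+n) = u$, independently of the chosen completion. Since $u \uparrow w[i..i+n)$ makes $u$ a subword of $w$, recurrence of $w$ already supplies infinitely many compatible occurrences, so it suffices to extract infinitely many that land on full windows. If $w$ has finitely many holes this is immediate: all but finitely many of those occurrences begin beyond the last hole and so have full windows.

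The eventually increasing gap function is the hard case and the main obstacle, since holes now persist arbitrarily far out and a single hole inside a window can spoil an occurrence. Because the gaps are eventually strictly increasing we have $h(n) \to \infty$, so on a suitable tail every length-$n$ window meets at most one hole; hence each of the infinitely many compatible occurrences of $u$ there is either full or straddles exactly one hole. Arguing by contradiction, I would assume only finitely many are full, so infinitely many straddle a single hole, and pigeonhole the offset $d \in \{0,\dots,n-1\}$ of that hole to obtain infinitely many holes about which $w$ displays the identical length-$n$ configuration $u[0..d)\,\d\,u[d+1..n)$. The plan is then to extend these repeated configurations outward to the neighbouring holes and to exploit the fact that eventually strictly increasing gaps are pairwise distinct: this should let me pin down a gap-spanning subword of $w$ whose holes are forced to sit at one prescribed separation, so that it is compatible with only finitely many windows, contradicting the recurrence of $w$. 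This is the same mechanism -- a completion chosen to mismatch, forcing a separation that strictly increasing gaps cannot reproduce -- that drives the proof of Corollary~\ref{cor:rec_up}. The delicate step, where I expect the real work to lie, is to choose the outward extension so that the resulting subword genuinely \emph{measures} a unique gap rather than being absorbed into longer repeated blocks; once such a subword is isolated, its finitely many occurrences contradict recurrence, forcing infinitely many full occurrences of $u$ and hence the recurrence of $\hat w$.
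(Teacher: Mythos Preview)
Your reverse implication and the finite-holes case of the forward implication are correct and essentially identical to the paper's argument. The gap is in the eventually-increasing-gap case of the forward implication: you explicitly leave the ``delicate step'' undone, and the outward-extension/pigeonhole plan you sketch is not obviously completable --- the full words between consecutive holes can be arbitrary, so isolating a subword that ``measures'' a single gap length (and hence recurs only finitely often) requires substantially more than what you have written, and it is not clear the mechanism from Corollary~\ref{cor:rec_up} transfers here, since in that proof you get to \emph{choose} the mismatching completion, whereas here the completion $\hat w$ is given.

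The paper sidesteps all of this with a far simpler device: instead of tracking single occurrences of $u$, first use recurrence of $w$ to produce a word $v$ with $uvu \in \Sub(w)$ (such $v$ exists because $u$ already recurs in $w$). Now let $m = |uvu|$, choose $N$ so that $h(j) > m$ for all $j \geq N$, and use recurrence again to find $uvu$ compatible with a factor $z$ of $w$ starting beyond position $H(N)$. That factor $z$ contains at most one hole, so at least one of the two copies of $u$ inside it lies in a full window and is therefore a genuine factor of every completion $\hat w$. Repeating with further occurrences of $uvu$ yields a second full occurrence of $u$, hence $u$ appears at least twice in $\hat w$, which by Proposition~\ref{pro:recurrence} is enough. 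No contradiction, no pigeonhole on offsets, and no analysis of gap lengths beyond ``eventually larger than $m$''.
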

%\begin{comment}
\begin{proof}
	First suppose $w$ is recurrent. Let $\hat{w}$ be any completion of $w$. Proposition~\ref{pro:recurrence} implies that we only need to show that each subword of $\hat{w}$ appears at least twice. Choose $u \in \Sub_{\hat{w}}(n)$. Suppose $w$ has a finite number of holes. Then there exists $N>0$ such that $j\geq N$ implies $w(j)\neq \d$. Since $w$ is recurrent, $u$ appears starting at some position $i\geq N$, that is, $u = w[i..i+n)$. But note that $\hat{w}[i..i+n) = w[i..i+n)$. Hence, the subword $u$ occurs twice in $\hat{w}(i)\hat{w}(i+1)\cdots$ and thus $\hat{w}$ is recurrent.
	
	Now suppose $w$ has an eventually increasing gap function. Since $w$ is recurrent, we see that there exists a word $v$ such that $uvu\in \Sub(w)$. Let $m=|uvu|$ and choose $N$ such that for all $j\geq N$ we have $h(j)>m$. Recurrence implies that $uvu$ appears starting at some position $i$ greater than $H(N)$. Suppose $uvu\uparrow z$ where $z=w[i..i+m)$. Then $z$ contains at most one hole. Hence, at least one of $u = w[i..i+n)$ or $u = w[i+m-n..i+m)$ holds. Since $w$ is recurrent, $uvu$ has to appear again in $w$, and so $u$ must appear one more time in a factor of $w$ that contains no holes. Without loss of generality, assume that $w[i'..i'+n)$ is the desired full word. Then $w[i'..i'+n) = \hat{w}[i'..i'+n)$. Hence $u = \hat{w}[i'..i'+n)$ so that $u$ appears at least twice in $\hat{w}$. Hence, $\hat{w}$ is recurrent.
		
	Now suppose every completion is recurrent. Choose $u\in \Sub(w)$. Then there exists a completion $\hat{w}$ such that $u \in \Sub(\hat{w})$. Since $\hat{w}$ is recurrent $u$ occurs again at an index different from where it appeared initially in $w$. Suppose $u = \hat{w}[i..i+|u|)$. Since $\hat{w}$ is a completion of $w$ we see that $\hat{w}[i..i+|u|) \uparrow w[i..i+|u|)$. Hence $u$ occurs twice in $w$ so that $w$ is recurrent. 
\end{proof}
%\end{comment}

\section{Completions of Infinite Partial Words}
\label{sec:completions}

We investigate the relationship between the complexity of an infinite partial word $w$ and the complexity achievable by a given completion $\hat{w}$. Our main question is given an infinite partial word $w$ how much complexity can be preserved while passing to a completion?

\begin{theorem} \label{partialEquiv}
Let $w$ be an infinite  
recurrent partial word. Then there exists a completion of $w$, $\hat{w}$, such that $\Sub(w) = \Sub(\hat{w})$.
\end{theorem}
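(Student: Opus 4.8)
The plan is to use recurrence to \emph{protect} one occurrence of each subword while filling in all remaining holes consistently. First I would dispense with the easy half: since $\hat{w}$ is a completion of $w$, the preliminaries already give $\Sub_{\hat{w}}(n)\subseteq\Sub_w(n)$ for every $n$, hence $\Sub(\hat{w})\subseteq\Sub(w)$ for \emph{any} completion whatsoever. So the entire content of the theorem is the reverse inclusion $\Sub(w)\subseteq\Sub(\hat{w})$, and it suffices to exhibit a single completion in which every subword of $w$ survives as an exact full-word factor.

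To construct such a completion I would enumerate $\Sub(w)=\{u_1,u_2,\ldots\}$ (this set is countable, being a subset of $A^*$) and then assign letters to the holes in stages, maintaining a right boundary $M_0=0$. At stage $k$, because $w$ is recurrent the subword $u_k$ occurs infinitely often and therefore at arbitrarily large starting positions; I would pick an occurrence starting at some $i_k\geq M_{k-1}$, so that $w[i_k..i_k+|u_k|)\uparrow u_k$. I then fill every hole lying in the interval $[i_k,i_k+|u_k|)$ with the letter prescribed by $u_k$ at that coordinate (the already-defined positions of $w$ agree with $u_k$ by compatibility), and set $M_k=i_k+|u_k|$. After processing all stages, I fill each hole that was never covered by one of these intervals with a fixed arbitrary letter of $A$, producing a genuine completion $\hat{w}$ defined at every position.

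The only delicate point, and the one genuine obstacle, is the consistency of these hole assignments across infinitely many stages: a single hole must not be asked to take two different letters. This is exactly what the choice $i_k\geq M_{k-1}$ guarantees, for it forces the protected intervals $[i_k,i_k+|u_k|)$ to be pairwise disjoint, so every hole receives at most one prescribed letter and $\hat{w}$ is well defined. By construction $\hat{w}[i_k..i_k+|u_k|)=u_k$, whence $u_k\in\Sub(\hat{w})$ for all $k$ and thus $\Sub(w)\subseteq\Sub(\hat{w})$; together with the free inclusion above this yields $\Sub(w)=\Sub(\hat{w})$. I would close by remarking that filling the leftover holes arbitrarily is harmless, since those positions lie outside every protected interval and so cannot disturb any preserved occurrence, and in any case an arbitrary completion can only delete subwords, never create ones outside $\Sub(w)$.
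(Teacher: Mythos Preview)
Your proposal is correct and follows essentially the same strategy as the paper: enumerate $\Sub(w)$, use recurrence to locate each subword in a fresh, non-overlapping segment to the right of all previously used positions, and complete so that each subword is realized there; the reverse inclusion $\Sub(\hat{w})\subseteq\Sub(w)$ is automatic. The only cosmetic difference is that the paper partitions $w$ into consecutive blocks $w[0..n_0], w(n_0..n_1],\ldots$ (so every position is covered and there are no ``leftover'' holes to fill), whereas you allow gaps between the protected intervals and fill those holes arbitrarily---both variants work for the same reason, namely the disjointness of the intervals in which the prescribed letters are placed.
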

%\begin{comment}
\begin{proof}
The set $\Sub(w)$ is countable, so choose some enumeration of its elements $x_0, x_1, x_2, \ldots$.  
Choose $n_0$ so that $x_0 \lhd w[0..n_0]$. Since $x_1$ occurs infinitely often in $w$, we can find some $n_1>n_0$ so that $x_1 \lhd w(n_0..n_1]$. Similarly we can find some $n_2 > n_1$ so that $x_2 \lhd w(n_1..n_2]$ and so on for each $x_i$. Now we complete $w[0..n_0]$ so that it contains $x_0$ as a subword, $w(n_0..n_1]$ so that it contains $x_1$, and so on to get $\hat{w}$. By construction $\Sub(w) \subset \Sub(\hat{w})$ and we have  $\Sub(\hat{w}) \subset \Sub(w)$. 
\end{proof}
%\end{comment}

Another question is to ask when a completion with maximal complexity exists. We know by Theorem~\ref{partialEquiv} that it is sufficient that the original partial word $w$ be recurrent. In the case where $w$ has infinitely many holes, this turns out to be necessary as well.

\begin{theorem}\label{thm:completion}
	Let $w$ be a partial word with infinitely many holes. Then $w$ is recurrent if and only if there exists a completion $\hat{w}$ such that $\Sub(w) = \Sub(\hat{w})$.
\end{theorem}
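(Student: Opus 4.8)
The plan is to prove both directions of the equivalence, using Theorem~\ref{partialEquiv} for the easier implication and constructing a careful argument for the converse, which is where the infinitude of holes becomes essential.

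\textbf{The forward direction.} Suppose $w$ is recurrent. This is immediate from Theorem~\ref{partialEquiv}, which already produces a completion $\hat{w}$ with $\Sub(w) = \Sub(\hat{w})$ for any recurrent $w$ (the finiteness or infinitude of the hole set plays no role there). So no work is needed here beyond citing that theorem.

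\textbf{The converse direction.} Suppose there exists a completion $\hat{w}$ with $\Sub(w) = \Sub(\hat{w})$, and we must show $w$ is recurrent. By Proposition~\ref{pro:recurrence}, it suffices to show every subword of $w$ occurs at least twice. Let $u \in \Sub(w)$ with $|u| = n$. Since $\Sub(w) = \Sub(\hat{w})$, we have $u \in \Sub(\hat{w})$, so $u$ occurs at some position $i$ in $\hat{w}$, i.e.\ $u = \hat{w}[i..i+n)$. Now I would exploit the infinitely many holes: choose a hole of $w$ lying at some position $j \geq i+n$, so that $w(j) = \d$. Consider the subword $x = \hat{w}[i..j+1]$, which is a full word extending the occurrence of $u$ and covering the hole position $j$. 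Because $x \in \Sub(\hat{w}) = \Sub(w)$, the word $x$ must itself be compatible with some factor of $w$; the key point is that $x$ reappears in $w$ at a position where the hole $j$ is \emph{filled in} by the letter $\hat{w}(j)$ rather than being a $\d$. The goal is to arrange that this reappearance forces a second, genuinely distinct occurrence of $u$ in $w$.

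\textbf{The main obstacle and how to handle it.} The delicate point is that ``$u$ occurs at position $i$ in $\hat{w}$'' does not by itself give two occurrences of $u$ in $w$, since $\hat{w}$ is a completion and $w[i..i+n)$ may contain holes---so that occurrence of $u$ in $w$ is not ``pinned down'' as a concrete repetition. The infinitude of holes is what rescues this: I would build a full word $x$ that strictly contains the occurrence of $u$ and also strictly contains at least one hole position of $w$ that is filled nontrivially in $\hat{w}$. Then $x \in \Sub(w)$ means $x \uparrow w[\ell..\ell+|x|)$ for some $\ell$; since $w[\ell..\ell+|x|)$ is compatible with the full word $x$ and $x$ pins down the letter at the covered hole position in a way incompatible with the alignment $\ell = i$, the two alignments must differ, yielding two distinct occurrences of $u$. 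Making this counting rigorous---arguing that the forced positional disagreement at the hole guarantees $\ell \neq i$, and hence two occurrences of $u$---is the crux; I would handle it by mirroring the incompatibility technique used in the proof of Corollary~\ref{cor:rec_up}, where a letter is deliberately chosen to clash with the periodic alignment, here adapted so that the presence of a concretely-filled hole in $\hat{w}$ separates the two occurrences.
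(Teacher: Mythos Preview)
Your forward direction is fine and matches the paper exactly.

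For the converse, you have the right instinct but the argument as written has a gap. You define $x = \hat{w}[i..j+1]$, i.e.\ you take the letters of $\hat{w}$ verbatim through the hole position $j$. With that choice, $x$ \emph{is} compatible with $w[i..j+1)$, because $w(j)=\d$ matches $\hat{w}(j)$ just as well as any other letter. So nothing rules out $\ell = i$, and your claim that ``$x$ pins down the letter at the covered hole position in a way incompatible with the alignment $\ell = i$'' is false for this $x$. You seem to sense this in your final paragraph, where you speak of ``a letter deliberately chosen to clash,'' but you never actually make that substitution. The repair is exactly what you allude to: set $x = \hat{w}[i..j)\,a$ with $a \neq \hat{w}(j)$. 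Then $x \in \Sub(w)$ (the hole at $j$ accepts $a$), hence $x \in \Sub(\hat{w})$, so $x = \hat{w}[\ell..\ell+|x|)$ for some $\ell$; now $\ell = i$ would force $\hat{w}(j) = a$, a contradiction, so $\ell \neq i$ and $u = \hat{w}[\ell..\ell+n) \uparrow w[\ell..\ell+n)$ gives the second occurrence.

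The paper's proof runs on the same clash trick but organizes it more cleanly: instead of treating an arbitrary subword $u$, it works only with prefixes $\hat{w}[0..H(n)-1)$, appends a letter $a \neq \hat{w}(H(n)-1)$, and observes the resulting word lies in $\Sub(w) = \Sub(\hat{w})$ but cannot sit at position $0$. This shows every prefix of $\hat{w}$ occurs twice, so $\hat{w}$ is recurrent by Proposition~\ref{pro:recurrence}(2), and then $\Sub(w) = \Sub(\hat{w})$ transfers recurrence to $w$. Your route (arbitrary subword, extend rightward past a hole) is a mild generalization of this, and once you insert the clashing letter explicitly it is a valid alternative proof; the paper's prefix version is just shorter because the ``first occurrence'' is always at position $0$.
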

%\begin{comment}
\begin{proof}
	The forward implication is simply a consequence of Theorem \ref{partialEquiv}. 
For the backward implication, suppose there exists a completion $\hat{w}$ such that $\Sub(w) = \Sub(\hat{w})$. We show that the prefix of length $H(n)-1$ of $\hat{w}$ occurs twice for every $n\geq 1$. Choose $a\in A$ such that $a \neq \hat{w}(H(n)-1)$. Then $v= \hat{w}[0..H(n)-1)a \in \Sub(w) = \Sub(\hat{w})$. Hence $v$ must occur somewhere in $\hat{w}$. But it cannot occur as a prefix since $a \neq \hat{w}(H(n)-1)$. Thus there exists $i> 0$ such that $\hat{w}[i..i+H(n)) = v$. But then $\hat{w}[i..i+H(n)-1) = \hat{w}[0..H(n)-1)$ so that $\hat{w}[0..H(n)-1)$ appears twice. Thus every prefix of $\hat{w}$ occurs twice and thus $\hat{w}$ is recurrent and since $\Sub(w) = \Sub(\hat{w})$, $w$ is recurrent as well. 
\end{proof}
%\end{comment}

The proof really relies on the fact that $w$ has infinitely many holes. The theorem is not true in the case of finitely many holes. For example, choose $w = {\d} a^\omega$ and $\hat{w} = ba^\omega$. Then $\Sub(w) = \Sub(\hat{w})$ but $w$ is not recurrent since $b$ occurs only once. However, we note that $\sigma(w)$ is recurrent. This fact actually holds more generally. First we call an infinite partial word $w$ {\em ultimately recurrent} if there exists an integer $p \geq 0$ such that $\sigma_p(w)$ is recurrent. With this definition in hand we can extend Theorem~\ref{thm:completion} to the case when we may not have infinitely many holes.

\begin{corollary}
\label{cor:unlabelled1}
	Let $w$ be an infinite partial word with at least one hole. If there exists a completion $\hat{w}$ of $w$ such that $\Sub(w) = \Sub(\hat{w})$, then $w$ is ultimately recurrent. In fact $\sigma_{H(1)}(w)$ is recurrent, where $H(n)$ is the hole function.
\end{corollary}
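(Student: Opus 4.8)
The plan is to prove the stronger assertion that $\sigma_{H(1)}(w)$ itself is recurrent; ultimate recurrence then follows immediately from the definition (take $p=H(1)$). Write $p=H(1)-1$ for the position of the first hole, let $\hat w$ be the completion with $\Sub(w)=\Sub(\hat w)$, and set $c=\hat w(p)$. Since $A$ has at least two letters I may fix some $a\neq c$. By Proposition~\ref{pro:recurrence} it suffices to show that every subword of $\sigma_{H(1)}(w)$ occurs at least twice in $\sigma_{H(1)}(w)$; equivalently, that every subword of $w$ that occurs at some position $\geq H(1)$ in fact occurs at two distinct positions $\geq H(1)$. The key tool is the containment $\Sub(w)\subseteq\Sub(\hat w)$, which is the nontrivial half of the hypothesis.

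The core step is a reoccurrence trick anchored at the first hole. Given a subword $u$ of length $n$ occurring at a position $s\geq H(1)$, so that $u\uparrow w[s..s+n)$, I would build a single full word $T$ of length $s+n$ as a completion of the prefix $w[0..s+n)$ subject to two constraints: $T(p)=a$, which is legal because $w(p)=\d$, and $T[s..s+n)=u$, which is legal because $u\uparrow w[s..s+n)$ and $p<s$, so the two constraints do not overlap; all remaining holes of the prefix are filled arbitrarily. Then $T\lhd w$, so $T\in\Sub(w)=\Sub(\hat w)$, and since $\hat w$ is full this means $T=\hat w[r..r+s+n)$ for some $r$. The decisive observation is that $r\neq 0$: an occurrence at $r=0$ would force $\hat w(p)=T(p)=a\neq c=\hat w(p)$, a contradiction. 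Hence $r\geq 1$, and reading off the embedded copy of $u$ gives $u=\hat w[r+s..r+s+n)$, whence $u\uparrow w[r+s..r+s+n)$ with $r+s\geq s+1>s\geq H(1)$. This is a second occurrence of $u$, distinct from the one at $s$ and again at a position $\geq H(1)$.

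The step I expect to be the main obstacle — and the reason the shift is by exactly $H(1)$ — is guaranteeing that the new occurrence lands at a position $\geq H(1)$ rather than to the left of the first hole. The naive version of the trick, prepending a single mismatching letter to form $a\,u$, only forces the reoccurrence to avoid one position and could still place the copy of $u$ before the hole. Prepending the \emph{entire} prefix $w[0..s+n)$ while planting the mismatch at $p$ is precisely what pins $r\geq 1$ and simultaneously keeps the copy of $u$ at offset $s\geq H(1)$ inside $T$. Because the construction uses only the first hole, the argument is uniform in the total number of holes (for infinitely many holes one could alternatively invoke Theorem~\ref{thm:completion} to get recurrence of $w$ outright, but this is not needed). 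Having shown every subword of $\sigma_{H(1)}(w)$ occurs at least twice in it, Proposition~\ref{pro:recurrence} yields that $\sigma_{H(1)}(w)$ is recurrent, so $w$ is ultimately recurrent.
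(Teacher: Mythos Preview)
Your proof is correct and follows essentially the same approach as the paper: complete a prefix of $w$ reaching past the chosen occurrence while planting a mismatch with $\hat w$ at the first hole, then use $\Sub(w)=\Sub(\hat w)$ to force a nontrivial shift. The only cosmetic difference is that you verify criterion~(3) of Proposition~\ref{pro:recurrence} (every subword occurs twice) for arbitrary $s\geq H(1)$, whereas the paper verifies criterion~(2) (every prefix-compatible subword occurs twice), i.e.\ the special case $s=H(1)$; the underlying construction is identical.
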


%\begin{comment}
\begin{proof}
	We claim that if $H(1)-1$ is the position of the first hole and $p = H(1)$ then $\sigma_p(w)$ is recurrent. Let $v = \sigma_p(w)$. By Proposition \ref{pro:recurrence}, it suffices to show that every finite prefix of any completion of $v$ occurs twice in $v$. Thus suppose $z$ is a full word such that $z \uparrow v[0..n)$. Choose a completion $u$ of $w[0..n+p)$ so that $z = u[p..n+p)$. In addition, we require that the hole at position $H(1)-1$ be filled in such a way that $u(H(1)-1) \neq \hat{w}(H(1)-1)$. Then $u \in \Sub(\hat{w})$. However, we see that the way we filled in the hole at $H(1)-1$ prohibits $u$ from occuring as a prefix of $\hat{w}$. Thus there exists an index $i > 0$ such that $u = \hat{w}[i..i+n+p)$. But then $z = u[p..n+p) =\hat{w}[i+p..i+n+p) \uparrow v[i..i+n)$ so that $z$ appears twice in $v$. Hence $v$ is recurrent. Thus $w$ is ultimately recurrent.
\end{proof}
%\end{comment}

	 Let $\RSub_w(n)$ denote the set of recurrent subwords of length $n$ of a finite or infinite partial word $w$. Let $\RSub(w) = \bigcup_{n\geq 1} \RSub_w(n)$. Let $r_w(n) = |\RSub_w(n)|$ and $d_w(n) = p_w(n) - r_w(n)$. In other words, $d_w(n)$ counts the number of non-recurrent subwords of length $n$.
Note that $d_w(n)$ is non-decreasing. The following proposition captures the fact that in an ultimately recurrent partial word with finitely many holes almost every subword is recurrent.
\begin{proposition}\label{pro:bounded}
	Let $w$ be an infinite partial word with finitely many holes. Then $w$ is ultimately recurrent if and only if $d_w(n)$ is bounded.
\end{proposition}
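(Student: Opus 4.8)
The plan is to prove both implications separately; the finiteness of the hole set is needed only for the forward direction, while the backward direction is the substantive one. Throughout let $k$ be the alphabet size. For the forward direction, suppose $w$ is ultimately recurrent and fix $p$ with $\sigma_p(w)$ recurrent, and let $H$ be the total (finite) number of holes. The key observation is that a non-recurrent subword can only occur at positions $< p$: any subword occurring at some position $i \geq p$ is a subword of $\sigma_p(w)$, hence occurs infinitely often there and therefore infinitely often in $w$, i.e.\ is recurrent. Consequently every non-recurrent subword of length $n$ is compatible with one of the $p$ factors $w[i..i+n)$, $0 \leq i < p$, and each such factor, having at most $H$ holes, is compatible with at most $k^{H}$ full words. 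This gives the uniform bound $d_w(n) \leq p\,k^{H}$ for all $n$, so $d_w$ is bounded.

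For the backward direction, assume $d_w$ is bounded. Since $d_w$ is non-decreasing and integer-valued, it is eventually constant, say $d_w(n) = D$ for all $n \geq N$. Let $NR_n$ denote the set of non-recurrent subwords of length $n$, so $|NR_n| = D$ for $n \geq N$. Two structural facts drive the argument. First, if $u$ is non-recurrent and $u'$ is a length-$(|u|+1)$ full word with prefix $u$ that occurs in $w$, then the occurrences of $u'$ lie among those of $u$, so $u'$ is non-recurrent; moreover each $u \in NR_n$ admits at least one such non-recurrent right-extension. Second, consider the prefix map $\phi\colon NR_{n+1} \to \Sub_w(n)$ given by $\phi(u') = u'[0..n)$. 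By the first fact its image contains $NR_n$, and since $|NR_{n+1}| = |NR_n| = D$ for $n \geq N$, comparing cardinalities forces $\phi$ to be a bijection of $NR_{n+1}$ onto $NR_n$. In particular, for $n \geq N$ the length-$n$ prefix of any non-recurrent subword is again non-recurrent.

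With these facts I would call a position $i$ \emph{bad} if some non-recurrent subword of length $N$ occurs at $i$. There are only $D$ such subwords and each occurs finitely often, so there are finitely many bad positions; let $p$ be one more than the largest. The claim is that no non-recurrent subword of any length occurs at a position $\geq p$. Indeed, a non-recurrent occurrence of length $\geq N$ would, by the prefix fact, have a non-recurrent length-$N$ prefix at the same position, making it bad; and a non-recurrent occurrence of length $< N$ would, by the extension fact, extend to a non-recurrent length-$N$ subword at that position, again bad. Hence every subword of $\sigma_p(w)$ is recurrent in $w$, so occurs infinitely often and thus at positions $\geq p$ infinitely often, making $\sigma_p(w)$ recurrent and $w$ ultimately recurrent.

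The main obstacle is the second structural fact: a priori the prefix of a non-recurrent word could be recurrent, and it is precisely the bijection $\phi$ (available only because eventual constancy of $d_w$ makes the pigeonhole count tight) that rules this out. This is what lets me compress the possibly infinite collection of non-recurrent subwords of all lengths down to the finite set $NR_N$, whose finitely many occurrences in turn bound the bad positions and yield the shift $p$.
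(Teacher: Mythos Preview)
Your proof is correct and follows essentially the same approach as the paper's: both use the eventual constancy of $d_w$ together with the pigeonhole that each non-recurrent length-$n$ word extends to a distinct non-recurrent length-$m$ word, forcing the length-$N$ prefix of any long non-recurrent word to be non-recurrent, and then passing to a shift past the finitely many occurrences of the $D$ non-recurrent words of that threshold length. The only difference is packaging---you isolate the bijection $\phi\colon NR_{n+1}\to NR_n$ up front and argue directly, whereas the paper embeds the same counting inside a contradiction argument with a case split on whether the length-$n$ prefix of the offending $v$ is recurrent; your forward-direction bound $d_w(n)\le p\,k^{H}$ is also more carefully stated than the paper's.
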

%\begin{comment}
\begin{proof}
	Suppose $w$ is ultimately recurrent. Then there exists $p$ such that $\sigma_p(w)$ is recurrent. We claim that $d_w(n) \leq p$. Note that any subword beginning at an index $\geq p$ must be recurrent. Thus any non-recurrent subword must appear starting at a position less than $p$. Each position $i$ with $0\leq i < p$ contributes finitely many distinct subwords of length $n$.
	
	Now suppose $d_w(n)$ is bounded. Since $d_w(n)$ is non-decreasing, there exist a constant $C$ and an integer $n$ such that $C = d_w(n) = d_w(m)$ for all $m\geq n$. Since there are only $C$ non-recurrent subwords of length $n$ and each appears only finitely many times in $w$, there exists an $N$ such that none of these non-recurrent subwords appear starting at positions $i \geq N$. We claim that $w' = \sigma_N(w)$ is recurrent. For the sake of contradiction, suppose $w'$ is not. Then there must exist a non-recurrent word $v$ in $\Sub(w')$. Assume without loss of generality that $|v| = m \geq n$. Now we break the proof into two cases. If the prefix of length $n$ of $v$ was a non-recurrent subword of $w$, then this would contradict the choice of $N$. So suppose that the prefix of length $n$ of $v$ is not a non-recurrent subword of $w$. Note that each length $n$ non-recurrent subword contributes at least one distinct length $m$ non-recurrent subword. In addition $v$ is distinct from each of these since the prefixes of length $n$ do not match. Thus $d_w(m) > d_w(n)$, a contradiction. 
\end{proof}
%\end{comment}

The case when $w$ has infinitely many holes is markedly different. In particular $d_w(n)$ cannot be positive and bounded. This is captured in the following proposition.

\begin{proposition}
\label{pro:prop4}
	Let $w$ be a partial word with infinitely many holes. Then $d_w(n)$ is either identically zero or unbounded.
\end{proposition}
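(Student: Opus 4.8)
The plan is to prove the contrapositive of the nontrivial direction: assuming $d_w$ is not identically zero, I will exhibit a family of non-recurrent subwords whose cardinality at length $m$ grows without bound, forcing $d_w$ to be unbounded. The engine of the argument is a single structural observation, together with the hypothesis that $w$ has infinitely many holes, which I will use to manufacture exponentially many non-recurrent subwords.

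First I would record the key lemma-like fact: \emph{every right-extension of a non-recurrent subword is non-recurrent}. Concretely, if $u$ is a full word occurring only finitely often in $w$ and $u''$ is any subword of $w$ with $u''[0..|u|) = u$, then each occurrence of $u''$ at a position $j$ forces $u \uparrow w[j..j+|u|)$, i.e.\ an occurrence of $u$ at $j$; since $u$ occurs finitely often, so does $u''$. Thus $u''$ is non-recurrent. This reduces the problem to counting how many distinct length-$m$ subwords have a fixed non-recurrent word $u$ as their length-$|u|$ prefix.

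Next I would set up the counting. Suppose $d_w(n_0)>0$ for some $n_0$, and fix a non-recurrent subword $u$ of length $n_0$. Because $u$ occurs only finitely often, it has a last occurrence; let $i$ be the largest starting position of an occurrence of $u$, so that $u \uparrow w[i..i+n_0)$. For each $m>n_0$ consider the factor $z_m = w[i..i+m)$ and the set of full words $u''$ of length $m$ such that $u''[0..n_0) = u$ and $u'' \uparrow z_m$. Each such $u''$ is obtained by setting the holes of $z_m$ inside $[0,n_0)$ to agree with $u$ (possible since $u \uparrow w[i..i+n_0)$) and filling the holes of $w[i+n_0..i+m)$ freely. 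Hence the number of such $u''$ equals $k^{\,h_m}$, where $h_m$ is the number of holes of $w$ in the range $[i+n_0,\,i+m)$ and $k\ge 2$ is the alphabet size. By the lemma-like fact above, each of these $k^{h_m}$ distinct subwords is non-recurrent, so $d_w(m)\ge k^{\,h_m}$.

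Finally I would close the argument using the infinitude of the holes. Since $w$ has infinitely many holes and only finitely many positions lie below $i+n_0$, infinitely many holes occur at positions $\ge i+n_0$; therefore $h_m \to \infty$ as $m\to\infty$, and consequently $k^{\,h_m}\to\infty$. Thus $d_w(m)$ is unbounded, completing the contrapositive and hence the proposition. The step I expect to be the crux is the counting bound $d_w(m)\ge k^{h_m}$: one must verify carefully that the chosen completions are genuinely distinct length-$m$ subwords of $w$ (which follows from their distinct fillings of the holes beyond position $i+n_0$) and that each is non-recurrent (which is exactly the right-extension observation). Everything else is bookkeeping on hole positions, so I would not belabor it.
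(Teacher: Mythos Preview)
Your proof is correct and follows essentially the same approach as the paper's: both take a single non-recurrent subword $v$, extend it to the right through a factor of $w$ containing many holes, and observe that all $k^h$ resulting completions are distinct non-recurrent subwords, forcing $d_w$ to exceed any prescribed bound. The only cosmetic difference is that you phrase it as a contrapositive and choose the \emph{last} occurrence of $u$ (which is harmless but not actually needed, since the right-extension lemma works from any occurrence).
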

%\begin{comment}
\begin{proof}
	For the sake of contradiction, suppose there exists a constant $C$ such that $1\leq d_w(n) \leq C$ for all $n>0$. Then there exists an $n$ and a $v$ such that $v \in \Sub_w(n) \backslash \RSub_w(n)$. Since $v\in \Sub(w)$ there exists an index $i$ such that $v \uparrow w[i..i+n)$. Since $w$ has infinitely many holes, there exists an $m$ such that $w[n..m]$ has at least $h$ holes where $k^h > C$. Since $v$ is not recurrent, each of the completions of $vw[n..m]$ is non-recurrent. Hence if we let $j = |vw[n..m]|$ we see that $d_w(j) = p_w(j) - r_w(j) \geq k^h > C$, a contradiction.
\end{proof}
%\end{comment}

The infinite partial word $w$ being ultimately recurrent does not imply anything about the growth of $d_w(n)$ by itself. However, we can relate the growth of $p_w(n)$ and $r_w(n)$. Intuitively we can think of $w$ being ultimately recurrent as capturing the fact that $w$ has a large proportion of recurrent subwords. We would expect that $r_w(n)$ is a good approximation of $p_w(n)$. In fact, it turns out that $p_w(n) = \Theta(r_w(n))$.

\begin{proposition}
\label{pro:prop5}
	Let $w$ be an ultimately recurrent infinite partial word. Then there exists a constant $C$ such that $r_w(n) \leq p_w(n) \leq C r_w(n)$ for all $n$ sufficiently large. In other words, $p_w(n) = \Theta(r_w(n))$.
\end{proposition}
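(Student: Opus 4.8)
The plan is to write $p_w(n) = r_w(n) + d_w(n)$ and note that the lower bound $r_w(n)\le p_w(n)$ is immediate from $\RSub_w(n)\subseteq\Sub_w(n)$; the entire content lies in the upper bound, which amounts to showing $d_w(n) = O(r_w(n))$. By hypothesis there is an integer $p\ge 0$ with $v=\sigma_p(w)$ recurrent. I would first pin down $r_w(n)$ exactly in terms of $v$. A length-$n$ subword occurs infinitely often in $w$ if and only if it occurs at some position $\ge p$, which happens if and only if it is a subword of $v$; conversely, every subword of the recurrent word $v$ occurs infinitely often in $v$ and hence in $w$. This yields the clean identity $\RSub_w(n) = \Sub_v(n)$, so that $r_w(n) = p_v(n)$, and consequently $d_w(n) = |D_n|$, where $D_n = \Sub_w(n)\setminus\Sub_v(n)$ is precisely the set of length-$n$ subwords occurring only at positions $< p$.

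The second step is to bound $|D_n|$ by $p_v(n)$ up to a multiplicative constant. Every $u\in D_n$ is compatible with $w[i..i+n)$ for some $0\le i < p$. For $n\ge p$ I would split this factor as $w[i..i+n) = s_i t_i$, where $s_i = w[i..p)$ has length at most $p$ and $t_i = w[p..i+n) = v[0..i+n-p)$ is a prefix of $v$. Any full word $u$ compatible with $w[i..i+n)$ then factors as $u = u_1 u_2$ with $u_1\uparrow s_i$ and $u_2\uparrow v[0..|u_2|)$. There are at most $k^{p}$ choices for $u_1$ (one factor of $k$ per hole of $s_i$, and $s_i$ has at most $p$ holes), while $u_2$ ranges over subwords of $v$, of which there are at most $p_v(|u_2|)\le p_v(n)$ by monotonicity of $p_v$ (Theorem~\ref{thm:Fer}(1)). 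Summing over the $p$ admissible values of $i$ gives $|D_n|\le p\,k^{p}\,p_v(n)$.

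Combining the two steps yields $d_w(n) = |D_n| \le p\,k^{p}\,r_w(n)$ for all $n\ge p$, hence $p_w(n)\le (1 + p\,k^{p})\,r_w(n)$, so $C = 1 + p\,k^{p}$ works (and $C=1$ when $w$ is already recurrent). I expect the main obstacle to be the upper bound rather than the trivial lower bound: in the infinite-hole case $d_w(n)$ is genuinely unbounded by Proposition~\ref{pro:prop4}, so one cannot simply absorb it into a constant as in the finitely-many-holes setting. The crux is the structural observation that every non-recurrent subword is a short prefix-piece (coming from the at most $p$ positions before the recurrent tail) followed by a genuine subword of the recurrent part $v$; this is exactly what forces the count of non-recurrent subwords to be controlled by $p_v(n) = r_w(n)$.
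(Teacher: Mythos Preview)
Your proposal is correct and follows essentially the same approach as the paper: both bound $d_w(n)$ by splitting each non-recurrent length-$n$ subword at the position $p$ (the paper's $N$) where the recurrent tail begins, counting at most $k^p$ completions of the short prefix-piece and at most $r_w(n)$ completions of the tail, then summing over the $p$ possible starting positions. Your explicit identification $\RSub_w(n)=\Sub_v(n)$ (hence $r_w(n)=p_v(n)$) is a clean way to phrase the key point that completions of the tail are exactly the recurrent subwords, which the paper uses implicitly.
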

%\begin{comment}
\begin{proof}
	Suppose $N$ is such that $\sigma_N(w)$ is recurrent. Consider $d_w(n)$ for $n > N$. Then every non-recurrent subword of length $n$ must start at some position $i$, $0\leq i < N$, and must be compatible with a factor of the form $w[i..i+n)$. We can break the factor into two parts: $w[i..N)$ which may have a non-recurrent completion, and $w[N..i+n)$ where every completion is recurrent. If there are $h$ holes in $w[i..N)$, there are at most $k^h$ completions of $w[i..N)$. Any completion of $w[N..i+n)$ must be recurrent, each has length at most $n$, so there are at most $r_w(n)$ such completions.  Hence there are at most $k^h r_w(n)$ distinct non-receurrent subwords of length $n$ starting at position $i$. Since there are exactly $N$ possible starting positions for non-recurrent subwords, we see that $d_w(n) \leq N k^h r_w(n)$. Since $p_w(n) = r_w(n) + d_w(n)$, the result follows. 
\end{proof}
%\end{comment}

One might expect that if $w$ has a large proportion of recurrent subwords then it might be ultimately recurrent. However, this is not true in general. Consider the word $w$ that is all $a$'s except for $b$'s at positions $H(n)-1 = n^2-1$. Then it is easy to check that $p_w(n)$ is linear. Also, it is clear that every subword containing at most one $b$ is recurrent. There are $n+1$ such length $n$ words. Hence both $r_w(n)$ and $p_w(n)$ are linear. However, $w$ is not ultimately recurrent since any subword with at least two $b$'s occurs exactly once. Thus the requirement that a word be ultimately recurrent is too restrictive. In fact we can also find a partial word with infinitely many holes such that the same property holds. All that is required is to let the hole function be $H(n) = \lceil \alpha^n \rceil$ with $\alpha >2$ being a real number, and then notice that $p_w(n)$ is asymptotically linear. Then as before every word with at most one $b$ is recurrent so that $r_w(n) = n+1$. Hence $p_w(n) \leq C r_w(n)$ for a suitable constant $C \in \mathbb{R}$.

The above proposition has an easy corollary. We know that we can always find a completion that contains all the recurrent subwords. Thus if $w$ is ultimately recurrent then there exists a completion $\hat{w}$ whose complexity function is of the same order of growth as that of $w$.

\begin{corollary}
	Let $w$ be an ultimately recurrent infinite partial word. Then there exists a completion $\hat{w}$ such that $p_w(n) = \Theta(p_{\hat{w}}(n))$.  
\end{corollary}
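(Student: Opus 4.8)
The plan is to combine the two main tools already at hand: Theorem~\ref{partialEquiv}, which produces a completion capturing all subwords of a recurrent partial word, and Proposition~\ref{pro:prop5}, which tells us that $p_w(n) = \Theta(r_w(n))$ for ultimately recurrent $w$. The idea suggested by the preceding discussion is that the recurrent subwords of $w$ are precisely the subwords of a recurrent tail of $w$, and that a completion realizing all of those will necessarily have complexity of the same order of growth as $w$.

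First I would fix $N$ so that $v = \sigma_N(w)$ is recurrent; such an $N$ exists by ultimate recurrence. The key identification is $\RSub(w) = \Sub(v)$. Indeed, any recurrent subword of $w$ occurs infinitely often and hence occurs at some position $\geq N$, so it lies in $\Sub(v)$; conversely, any subword of the recurrent word $v$ occurs infinitely often in $v$, hence in $w$, so it is recurrent in $w$.

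Next I would apply Theorem~\ref{partialEquiv} to the recurrent word $v$ to obtain a completion $\hat{v}$ of $v$ with $\Sub(\hat{v}) = \Sub(v)$. I then build a completion $\hat{w}$ of the whole word $w$ by filling the holes in the prefix $w[0..N)$ arbitrarily and filling the holes of the tail exactly as in $\hat{v}$, so that $\hat{w}$ agrees with $\hat{v}$ from position $N$ onward. Because $\hat{v}$ is a suffix of $\hat{w}$, every subword of $\hat{v}$ is a subword of $\hat{w}$; hence $\Sub(\hat{w}) \supseteq \Sub(\hat{v}) = \Sub(v) = \RSub(w)$, and therefore $p_{\hat{w}}(n) \geq r_w(n)$ for all $n$. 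On the other hand, since $\hat{w}$ is a completion of $w$ we have the reverse inequality $p_{\hat{w}}(n) \leq p_w(n)$ from the remark in Section~\ref{sec:preliminaries}.

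Finally I would invoke Proposition~\ref{pro:prop5}: there is a constant $C$ with $p_w(n) \leq C\, r_w(n)$ for all sufficiently large $n$. Chaining the inequalities yields $\tfrac{1}{C}\, p_w(n) \leq r_w(n) \leq p_{\hat{w}}(n) \leq p_w(n)$ for large $n$, which is exactly $p_w(n) = \Theta(p_{\hat{w}}(n))$. The one place that requires care---and the only real obstacle---is the subword identification $\RSub(w) = \Sub(v)$ together with the verification that subwords of the completed tail $\hat{v}$ remain subwords of the assembled word $\hat{w}$; once these bookkeeping points are settled, the result drops out of the two cited facts with no further computation.
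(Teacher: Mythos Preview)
Your proposal is correct and follows essentially the same approach as the paper: find a completion containing all recurrent subwords and then invoke Proposition~\ref{pro:prop5}. The paper's text simply asserts that ``we can always find a completion that contains all the recurrent subwords''; your argument supplies the missing justification by identifying $\RSub(w)=\Sub(\sigma_N(w))$ and applying Theorem~\ref{partialEquiv} to the recurrent tail, which is exactly the right way to make that step rigorous.
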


Intuitively, the ``closest'' that a complexity function can be to another is to be within a constant of that function. Thus, if we could not attain maximal complexity with a completion, the best we could hope for is ``off by a constant'' complexity. The following proposition shows that this is not possible in general.

\begin{proposition}\label{pro:close}
	Let $w$ be a partial word with infinitely many holes. If $\hat{w}$ is a completion of $w$ such that $p_w(n) \leq p_{\hat{w}}(n) + C$ for all $n>0$ and some constant $C$, then $\Sub(w) = \Sub(\hat{w})$ and thus $p_w(n) = p_{\hat{w}}(n)$.
\end{proposition}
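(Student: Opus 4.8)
The plan is to argue by contradiction, mirroring the inflation technique used in the proof of Proposition~\ref{pro:prop4}. Since $\hat w$ is a completion of $w$, the preliminaries give $\Sub_{\hat w}(n)\subseteq\Sub_w(n)$ for every $n$, so the defect $d(n):=p_w(n)-p_{\hat w}(n)=|\Sub_w(n)\setminus\Sub_{\hat w}(n)|$ is non-negative, and the hypothesis is precisely that $d(n)\le C$ for all $n>0$. Because the inclusion $\Sub(\hat w)\subseteq\Sub(w)$ is automatic, proving the proposition reduces to showing $\Sub_w(n)\subseteq\Sub_{\hat w}(n)$, i.e.\ $d(n)=0$, for every $n$; the equality $p_w(n)=p_{\hat w}(n)$ is then immediate.

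So I would suppose, for contradiction, that some $v\in\Sub_w(n)\setminus\Sub_{\hat w}(n)$ exists. Then $v\uparrow w[i..i+n)$ for some index $i$, while $v$ occurs nowhere in $\hat w$; note that since both $\hat w$ and $v$ are full, $v\lhd\hat w$ would mean $v=\hat w[j..j+n)$ for some $j$, so ``$v\notin\Sub_{\hat w}(n)$'' literally says $v$ is not a factor of $\hat w$. I would now inflate this single bad word into more than $C$ of them using the holes. Fix $h$ with $k^h>C$, where $k=|A|$. Since $w$ has infinitely many holes I can choose $M>i+n$ so that $w[i+n..M)$ contains at least $h$ holes. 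Consider the completions $\hat u$ of the factor $w[i..M)$ whose length-$n$ prefix equals $v$: the prefix is forced to be $v$ (filling the holes of $w[i..i+n)$ according to $v$), while the $h$ holes inside $w[i+n..M)$ may be filled freely, yielding at least $k^h$ pairwise distinct full words of the common length $\ell:=M-i$.

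Each such $\hat u$ is compatible with the factor $w[i..M)$, hence $\hat u\in\Sub_w(\ell)$. On the other hand $\hat u\notin\Sub_{\hat w}(\ell)$: an occurrence $\hat u=\hat w[j..j+\ell)$ would force its length-$n$ prefix $v=\hat w[j..j+n)$ to be a factor of $\hat w$, contradicting the choice of $v$. Therefore these $\ge k^h$ words all lie in $\Sub_w(\ell)\setminus\Sub_{\hat w}(\ell)$, giving $d(\ell)\ge k^h>C$ and contradicting $d(\ell)\le C$. Hence no bad word exists, $\Sub(w)=\Sub(\hat w)$, and $p_w(n)=p_{\hat w}(n)$ for all $n$. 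I expect the only delicate point to be the counting step: verifying that fixing the prefix to $v$ still leaves the $h$ rightward holes free, so that the completions are genuinely distinct and numerous, and that the prefix-forcing argument cleanly rules out every occurrence in $\hat w$. The role of the infinitely-many-holes hypothesis is exactly to supply arbitrarily many such free holes, which is what makes the bound $k^h>C$ achievable.
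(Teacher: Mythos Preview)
Your proof is correct and follows essentially the same strategy as the paper: assume a missing subword $v$, extend it to the right past enough holes to produce more than $C$ distinct completions, and derive a contradiction from the bound $p_w(\ell)-p_{\hat w}(\ell)\le C$. The only cosmetic difference is that the paper phrases the contradiction as ``at least one of the $k^h$ completions must lie in $\Sub(\hat w)$, forcing $v\in\Sub(\hat w)$,'' whereas you phrase it as ``all $k^h$ completions lie in $\Sub_w(\ell)\setminus\Sub_{\hat w}(\ell)$, so $d(\ell)>C$''; these are the same pigeonhole count read from opposite ends.
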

%\begin{comment}
\begin{proof}
	For the sake of contradiction, assume there existed $v \in \Sub_w(n)$ with $v \not\in \Sub_{\hat{w}}(n)$. Then $v \uparrow w[i..i+n)$ for some  $i$. Now using the fact that $w$ contains infinitely many holes, we can chose an index $m$ such that $w_n \cdots w_m$ has at least $h$ holes where $k^h > C$. Then there are at least $k^h$ completions of $vw_n \cdots w_m$. Since $p_w(n) - p_{\hat{w}}(n) \leq C$ at least one of these completions, call it $u$, must also be a subword of $\hat{w}$. But then since $v$ is a prefix of $u$ we would necessarily have $v \in \Sub({\hat{w}})$, a contradiction.
\end{proof}
%\end{comment}

Thus a completion $\hat{w}$ must satisfy either $p_w(n) = p_{\hat{w}}(n)$ or the function $f(n) = p_w(n) - p_{\hat{w}}(n)$ must be unbounded. The above result actually holds more generally. What allows us to prove the above proposition is that we are able to use the holes to create ``enough'' subwords to overcome the constant $C$. Thus if we have $p_w(n) \leq p_{\hat{w}}(n) + \varphi(n)$ for some increasing function $\varphi$, then as long as the holes are spaced close enough together we must have $p_w(n) = p_{\hat{w}}(n)$. Thus the closer spaced the holes become, the farther away a non-maximal completion must be in terms of complexity. 

\begin{proposition}\label{pro:varphi}
	Let $w$ be an infinite partial word with hole function $H(m)$. If $\hat{w}$ is a completion of $w$ such that $p_w(n) \leq p_{\hat{w}}(n) + \varphi(n)$ for all $n > 0$ and some increasing function $\varphi$ satisfying $\lim_{n \rightarrow \infty} \frac{\varphi(H(n))}{k^n} = 0$, then $p_w(n) = p_{\hat{w}}(n)$.
\end{proposition}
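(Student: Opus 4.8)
The plan is to generalize the argument of Proposition~\ref{pro:close}, replacing the fixed constant $C$ by the slowly growing bound $\varphi$ and compensating by taking the witnessing factor long enough that the number of holes it contains overwhelms $\varphi$ evaluated at its length. Assume for contradiction that $p_w(n_0) > p_{\hat{w}}(n_0)$ for some $n_0$. Since a completion can only lose subwords, $\Sub_{\hat{w}}(n_0) \subseteq \Sub_w(n_0)$, so there is some $v \in \Sub_w(n_0) \setminus \Sub_{\hat{w}}(n_0)$, say $v \uparrow w[i..i+n_0)$.

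First I would extend this occurrence to the right, stopping exactly at a hole so that the hypothesis on $\varphi(H(n))$ can be applied. For large $N$, set $L = H(N) - i$ and consider the factor $w[i..H(N))$, which has $v$ as a compatible prefix since $L \geq n_0$. Let $h$ be the number of holes lying in the positions $[i+n_0, H(N))$; as exactly $N$ holes occur before position $H(N)$, we have $h = N - k_1$, where $k_1$ is the fixed number of holes occurring before position $i+n_0$. Freely filling those $h$ holes while keeping the first $n_0$ positions equal to $v$ produces $k^h$ distinct full words of length $L$, each compatible with $w[i..H(N))$ and hence a subword of $w$, and each having $v$ as a prefix.

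Next I would push these $k^h$ subwords through the complexity gap. Each has $v$ as a prefix, and since a prefix of any subword of $\hat{w}$ is itself a subword of $\hat{w}$ while $v \notin \Sub(\hat{w})$, none of these $k^h$ distinct length-$L$ subwords of $w$ can belong to $\Sub_{\hat{w}}(L)$. Hence they all lie in $\Sub_w(L) \setminus \Sub_{\hat{w}}(L)$, giving $p_w(L) - p_{\hat{w}}(L) \geq k^h$. But the hypothesis bounds this difference by $\varphi(L)$, so reaching a contradiction amounts to arranging $k^h > \varphi(L)$.

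The crux is exactly this inequality, and it is where the assumption on $\varphi$ enters. Monotonicity of $\varphi$ together with $L = H(N) - i \leq H(N)$ gives $\varphi(L) \leq \varphi(H(N))$, while $k^h = k^{N}/k^{k_1}$. Thus $k^h > \varphi(L)$ follows once $\varphi(H(N))/k^N < k^{-k_1}$, and since $k_1$ is a constant fixed by the chosen occurrence of $v$, the assumption $\lim_{n\to\infty}\varphi(H(n))/k^n = 0$ guarantees this for all sufficiently large $N$. The delicate point I expect to be the main obstacle is precisely that the constant $k^{k_1}$, arising from the holes buried inside the initial segment $w[0..i+n_0)$ and thus unavoidable once $v$ and its position $i$ are fixed, must be swallowed by the limit rather than compete with it; the hypothesis is phrased so that it is. I would close by observing that this recovers Proposition~\ref{pro:close} as the special case $\varphi \equiv C$, since then $\varphi(H(n))/k^n \to 0$ automatically.
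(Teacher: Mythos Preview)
Your proof is correct and follows essentially the same strategy as the paper's: assume a missing subword $v$, extend the witnessing factor out to a hole position $H(N)$ (the paper uses $H(m)$), count the holes picked up in the extension, and use the limit hypothesis to force the number of resulting completions to exceed $\varphi$ evaluated at the length, yielding the contradiction. Your bookkeeping with $k_1$ and $h=N-k_1$ is the same as the paper's choice of $j$ minimal with $H(j)\geq i+n$ and the bound $k^{m-j}>\varphi(H(m))$; the two differ only in off-by-one conventions.
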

%\begin{comment}
\begin{proof}
	The proof follows the same general strategy as that of Proposition \ref{pro:close}. For the sake of contradiction, suppose there existed $v \in \Sub_w(n)$ such that $v \not\in \Sub_{\hat{w}}(n)$. Then $v \uparrow w[i..i+n)$ for some $i$. Let $j$ be the smallest integer such that $H(j) \geq i+n$. Then choose $m>j$ such that $k^{m-j} > \varphi(H(m))$. Then there are at least $k^{m-j}$ distinct completions of $vw[i+n..H(m))$. Since they have length less than $\varphi(H(m))$ and $\varphi$ is increasing we see that at least one of them, call it $u$, must be contained in $\Sub(\hat{w})$. But since $v$ is a prefix of $u$ we see that $v \in \Sub(\hat{w})$, a contradiction. 
\end{proof}
%\end{comment}

The situation is different for infinite partial words with finitely many holes. If $w$ has finitely many holes then for each completion there exists a constant $C$ such that $p_w(n) \leq p_{\hat{w}}(n) + C$. However, if $C$ is small enough then it turns out that $w$ is actually ultimately recurrent.

\begin{proposition}
\label{pro:prop8}
	Let $w$ be an infinite partial word with exactly $h$ holes where $1 \leq h < \infty$. If there exists a completion $\hat{w}$ of $w$ such that $p_w(n) \leq p_{\hat{w}}(n) + C$ for all $n>0$ and some constant $C$ satisfying $C \leq k^h - 2$, then $w$ is ultimately recurrent. 
\end{proposition}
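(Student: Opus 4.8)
The plan is to reduce the statement to the recurrence of the full word obtained by shifting past the last hole, and then to exploit the hypothesis $C \leq k^h - 2$ through a counting argument on the completions of prefixes of $w$.

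First I would let $N$ be one more than the position of the last hole of $w$, so that $w' = \sigma_N(w)$ is a full word and $w$ and $\hat{w}$ agree at every position $\geq N$. Since ultimate recurrence of $w$ follows immediately from recurrence of $\sigma_N(w)$, it suffices to prove that $w'$ is recurrent. Because $w'$ is full, the only subword compatible with a finite prefix $w'[0..m)$ is the prefix itself; hence by Proposition~\ref{pro:recurrence} it is enough to show that each prefix $w'[0..m)$ occurs at least twice in $w'$.

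Next I would fix $m \geq 1$ and set $n = N + m$. Then $w[0..n)$ contains exactly the $h$ holes and factors as a partial word on $[0,N)$ carrying all the holes, followed by the full word $w'[0..m)$. Its $k^h$ completions are distinct full words lying in $\Sub_w(n)$, all sharing the common suffix $w'[0..m)$. Using $\Sub_{\hat{w}}(n) \subseteq \Sub_w(n)$, the hypothesis gives $|\Sub_w(n) \setminus \Sub_{\hat{w}}(n)| = p_w(n) - p_{\hat{w}}(n) \leq C \leq k^h - 2$, so at least $k^h - C \geq 2$ of these completions are factors of $\hat{w}$. I would then pick two distinct such completions $u_1 \neq u_2$, occurring at positions $i_1, i_2$ of $\hat{w}$; since $u_1 \neq u_2$, necessarily $i_1 \neq i_2$. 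As both $u_1$ and $u_2$ end in $w'[0..m)$, the word $w'[0..m)$ occurs in $\hat{w}$ at positions $i_1 + N$ and $i_2 + N$; these positions are $\geq N$, where $\hat{w}$ coincides with $w$, so $w'[0..m)$ genuinely occurs in $w'$ at the two distinct positions $i_1$ and $i_2$. Thus every prefix of $w'$ occurs at least twice, $w'$ is recurrent, and $w$ is ultimately recurrent.

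The main obstacle, and the reason the constant must be $k^h - 2$ rather than $k^h - 1$, is that forcing only a \emph{single} completion of the prefix into $\hat{w}$ is useless: that completion could be exactly $\hat{w}[0..n)$, occurring only at position $0$, which yields no repetition. Requiring \emph{two} distinct completions to be factors of $\hat{w}$ is precisely what manufactures a nontrivial second occurrence, and the choice of shift $N$ past the last hole is what lets this occurrence transfer from $\hat{w}$ back to $w$ (one must check the occurrences land in the full region, which the shift guarantees). The bound is tight, as the example $w = \d a^\omega$, $\hat{w} = b a^\omega$ (with $h = 1$, $k = 2$, and $C = 0 = k^h - 2$) illustrates.
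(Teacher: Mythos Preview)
Your proof is correct and follows essentially the same approach as the paper: shift past the last hole to obtain the full tail $w'=\sigma_{H(h)}(w)$, and use the counting argument that among the $k^h$ completions of $w[0..H(h)+m)$ at least two must lie in $\Sub(\hat{w})$, forcing two distinct occurrences of the suffix $w'[0..m)$ in the full region. The paper's wording of the final step is slightly different---it observes that of the two surviving completions at least one is not the prefix $\hat{w}[0..H(h)+m)$ and hence occurs at a nonzero position---but this is equivalent to your ``two distinct positions'' formulation.
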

%\begin{comment}
\begin{proof}
	We show that $v = \sigma_{H(h)}(w)$ is recurrent. We show that every finite prefix of $v$ occurs at least twice. Consider $v[0..n)$. Then there are $k^h$ distinct completions of $w[0..H(h))v[0..n)$. Since $C \leq k^h - 2$ at least two of these completions must be subwords of $\hat{w}$. Thus at least one is not compatible with a prefix of $\hat{w}$. Let $u$ be this subword. Then there must exist some $i>0$ such that $u = \hat{w}[i..i+|u|)$. Since $v[0..n)$ is a suffix of $u$ this implies that there exists $j>0$ such that $v[0..n) = v[j..j+n)$ so that every finite prefix of $v$ occurs twice.
\end{proof}
%\end{comment}

The following is a strengthening of Theorem \ref{thm:completion}.
\begin{theorem}\label{thm:recurrence_completion3}
	Let $w$ be a partial word with infinitely many holes. Then $w$ is recurrent if and only if there exists a completion $\hat{w}$ and constant $C$ such that $p_w(n) \leq p_{\hat{w}}(n) + C$ for all $n > 0$.
\end{theorem}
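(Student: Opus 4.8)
The plan is to derive both implications directly from results already established, without introducing any new construction; this theorem is essentially a bridge between Theorem~\ref{thm:completion} and Proposition~\ref{pro:close}, and its role is to upgrade the hypothesis ``$\Sub(w) = \Sub(\hat{w})$'' in Theorem~\ref{thm:completion} to the weaker-looking ``$p_w(n) \leq p_{\hat{w}}(n) + C$''.

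For the forward implication, I would suppose $w$ is recurrent and invoke Theorem~\ref{partialEquiv} (equivalently, the forward direction of Theorem~\ref{thm:completion}) to obtain a completion $\hat{w}$ with $\Sub(w) = \Sub(\hat{w})$. Since equal subword sets force $p_w(n) = p_{\hat{w}}(n)$ for every $n$, the required inequality holds with constant $C = 0$. This direction is immediate and carries no difficulty.

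For the backward implication, I would suppose there is a completion $\hat{w}$ and a constant $C$ with $p_w(n) \leq p_{\hat{w}}(n) + C$ for all $n > 0$, and appeal to Proposition~\ref{pro:close}. Because $w$ has infinitely many holes, that proposition shows the constant-gap hypothesis is already strong enough to force $\Sub(w) = \Sub(\hat{w})$. Once subword-set equality is in hand, the backward direction of Theorem~\ref{thm:completion} applies verbatim and yields that $w$ is recurrent.

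I expect no genuine obstacle inside this proof: the only real content lies in the backward implication, and it has been entirely offloaded to Proposition~\ref{pro:close}. The essential mechanism there—using infinitely many holes to manufacture at least $k^h > C$ completions of a supposed non-recurrent prefix, thereby overwhelming any fixed additive constant—is precisely what makes the ``off by a constant'' gap collapse to exact equality. Thus the hard part is not in this theorem at all but in Proposition~\ref{pro:close}, and once that and Theorem~\ref{thm:completion} are granted, the statement reduces to a short chain of citations.
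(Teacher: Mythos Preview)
Your proposal is correct and matches the paper's own proof essentially line for line: the forward direction cites Theorem~\ref{thm:completion} (via Theorem~\ref{partialEquiv}) to get a completion with $C=0$, and the backward direction applies Proposition~\ref{pro:close} to upgrade the constant-gap bound to $\Sub(w)=\Sub(\hat{w})$ and then invokes Theorem~\ref{thm:completion}. There is nothing to add or correct.
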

%\begin{comment}
\begin{proof}
	The forward implication is a direct consequence of Theorem \ref{thm:completion}.
	For the backward implication, if $p_w(n) \leq p_{\hat{w}}(n) + C$ then Proposition \ref{pro:close} implies that $\Sub(w) = \Sub(\hat{w})$. Then Theorem \ref{thm:completion} implies that $w$ is recurrent. 
\end{proof}
%\end{comment}

Intuitively Theorem~\ref{thm:recurrence_completion3} shows us that we cannot get too close (i.e off by a constant) to the complexity of $w$ with a completion unless $w$ is recurrent. In fact the conditions in the previous theorems are actually stronger than what is needed. In order to show recurrence of $w$ we only need to be able to find completions $\hat{w}$ that stay close to $p_w(n)$ for arbitrarily large $n$. This is made precise in the following lemma.

\begin{lemma}\label{lem:recurrence_almost_maximal}
	Let $w$ be a partial word with infinitely many holes. Suppose that for each $N > 0$ there exists a completion $\hat{w}$ such that $p_w(n) = p_{\hat{w}}(n)$ for all $n \leq N$. Then $w$ is recurrent.
\end{lemma}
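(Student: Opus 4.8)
The plan is to reduce recurrence of $w$ to Proposition~\ref{pro:recurrence}(2): it suffices to show that every subword $v$ compatible with a finite prefix of $w$ occurs at least twice. So I would fix such a $v$, writing $v \uparrow w[0..\ell)$ with $\ell = |v|$, and exhibit a second occurrence of $v$ in $w$. The overall strategy mirrors the backward implication of Theorem~\ref{thm:completion}, but adapted to the fact that here the matching completion is allowed to depend on the target length.

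First I would unpack the hypothesis. Since $\Sub_{\hat w}(n) \subseteq \Sub_w(n)$ for any completion, the equality $p_w(n) = p_{\hat w}(n)$ is equivalent to $\Sub_w(n) = \Sub_{\hat w}(n)$; hence the assumption says that for every $N$ there is a completion $\hat w$ (depending on $N$) with $\Sub_w(n) = \Sub_{\hat w}(n)$ for all $n \leq N$.

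Next comes the construction. Using that $w$ has infinitely many holes, I would let $H(n)-1$ be the position of the first hole at or beyond $\ell$, set $N = H(n)$, and take the completion $\hat w$ provided for this $N$. Choosing a letter $a \neq \hat w(H(n)-1)$, which is possible since $|A| \geq 2$, I form the full word $V = v\, w[\ell..H(n)-1)\, a$ of length exactly $H(n) = N$. By minimality of $n$ the positions $\ell, \ldots, H(n)-2$ carry no holes while $H(n)-1$ is a hole, so a direct check gives $V \uparrow w[0..H(n))$, that is, $V \in \Sub_w(H(n)) = \Sub_{\hat w}(H(n))$; thus $V$ is a factor of the full word $\hat w$.

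Finally I would transfer the occurrence back to $w$. Writing $\hat w[i..i+H(n)) = V$, the disagreement $V(H(n)-1) = a \neq \hat w(H(n)-1)$ forces $i \neq 0$, so $i > 0$; restricting to the first $\ell$ letters yields $\hat w[i..i+\ell) = v$, whence $v \uparrow w[i..i+\ell)$ because $\hat w$ completes $w$. Then $v$ occurs at the two distinct positions $0$ and $i$, which is the desired conclusion. The hard part, and the reason the argument is more delicate than Theorem~\ref{thm:completion}, is exactly that $\hat w$ varies with $N$: one cannot fix a single completion and quote Theorem~\ref{thm:completion}, and must instead push each occurrence found in $\hat w$ back into $w$ by compatibility. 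This is what makes it essential to plant the blocking letter $a$ at a hole position, so that the manufactured subword $V$ cannot reappear as a prefix of $\hat w$ and is forced to occur elsewhere.
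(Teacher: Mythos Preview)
Your proposal is correct and follows essentially the same approach as the paper's proof: locate the first hole past the end of the given subword, choose a completion matching complexities up to that length, plant a letter at the hole that disagrees with the completion, and conclude the resulting word must occur at a shifted position, yielding a second occurrence of the original subword. The only cosmetic differences are that you invoke condition~(2) of Proposition~\ref{pro:recurrence} (prefixes) and argue directly, whereas the paper invokes condition~(3) (arbitrary subwords) and phrases it as a contradiction.
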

%\begin{comment}
\begin{proof}
	By Proposition \ref{pro:recurrence} it suffices to show that each subword appears at least twice. We argue by contradiction. Suppose there exists a subword $v \in \Sub_w(n)$ that appears only once. Say $v \uparrow w[i..i+n)$. Since $w$ has infinitely many holes there exists a smallest index $j \geq i+n$ such that $w(j) = \d$. Now choose a completion $\hat{w}$ such that $p_w(m) = p_{\hat{w}}(m)$ for all $m \leq j+1$. Choose $a \in A$ such that $a \neq \hat{w}_j$. Then consider $u = v w[i+n..j) a$. Then $|u| \leq j+1$ so that $u$ is a subword of $\hat{w}$. Thus $u$ must appear somewhere in $\hat{w}$. But it cannot appear starting at position $i$. Thus there must exist another position $i'$ such that $u = \hat{w}[i'..i'+|u|)$. But then $v \uparrow w[i'..i'+n)$ so that $v$ appears twice in $w$, a contradiction.
\end{proof}
%\end{comment}

We can now use the above lemma to prove a stronger version of Theorem \ref{thm:recurrence_completion3}. 

\begin{corollary}
\label{cor:unlabelled2}
	Let $w$ be a partial word with infinitely many holes. Suppose there exists a constant $C$ such that for each $N > 0$ there exists a completion $\hat{w}$ such that $p_w(n) \leq p_{\hat{w}}(n) + C$ for all $n \leq N$. Then $w$ is recurrent.
\end{corollary}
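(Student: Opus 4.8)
The plan is to reduce everything to Lemma~\ref{lem:recurrence_almost_maximal}: it suffices to show that for each $N > 0$ there is a completion $\hat{w}$ with $p_w(n) = p_{\hat{w}}(n)$ for all $n \leq N$, since that lemma then yields recurrence of $w$. The difficulty, compared with Proposition~\ref{pro:close}, is that here the completion is allowed to depend on $N$ and the bound $p_w(n) \leq p_{\hat{w}}(n) + C$ is only promised up to a finite cutoff, so Proposition~\ref{pro:close} cannot be applied verbatim. My plan is therefore to \emph{localize} the extension argument of that proposition: for a fixed target length $N$ I would determine in advance how large a cutoff $N'$ is needed so that the hypothesis, applied at $N'$, already rules out every ``bad'' subword of length at most $N$.

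First I would fix $N$ and pick an integer $h$ with $k^h > C$. Since each set $\Sub_w(n)$ is finite, the union $\bigcup_{n \leq N} \Sub_w(n)$ is finite; let $B$ be an upper bound for $i + n$ taken over the first occurrences $v \uparrow w[i..i+n)$ of all subwords $v$ of length $n \leq N$ (these positions are finite in number, so $B$ exists and $B \geq N$). Using that $w$ has infinitely many holes, I would then choose a position $P$ so that the factor $w[B..P)$ contains at least $h$ holes, and set $N' = P$. Applying the hypothesis at $N'$ produces a completion $\hat{w}$ with $p_w(n) \leq p_{\hat{w}}(n) + C$ for all $n \leq P$.

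Now I would run the extension argument at this fixed $\hat{w}$. Suppose, for contradiction, that some $v \in \Sub_w(n) \setminus \Sub_{\hat{w}}(n)$ with $n \leq N$; take its first occurrence $v \uparrow w[i..i+n)$, so that $i + n \leq B$. The partial word $v\,w[i+n..P)$ has length $L = P - i \leq P = N'$ and, because $i + n \leq B \leq P$, it contains at least the $h$ holes of $w[B..P)$; hence it has at least $k^h > C$ distinct completions, all lying in $\Sub_w(L)$ and all having $v$ as their length-$n$ prefix. Since $p_w(L) - p_{\hat{w}}(L) \leq C < k^h$ and $\Sub_{\hat{w}}(L) \subseteq \Sub_w(L)$, at least one of these completions must belong to $\Sub_{\hat{w}}(L)$; taking its length-$n$ prefix forces $v \in \Sub_{\hat{w}}(n)$, a contradiction. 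Thus $\Sub_w(n) = \Sub_{\hat{w}}(n)$, equivalently $p_w(n) = p_{\hat{w}}(n)$, for every $n \leq N$, and Lemma~\ref{lem:recurrence_almost_maximal} gives recurrence of $w$.

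The step I expect to be the crux is the bookkeeping that makes $N'$ depend only on $N$ and not on the unknown bad subword $v$: the finiteness of the short subwords bounds all relevant starting positions by $B$, while the infinitude of holes guarantees that a \emph{bounded} extension past $B$ already accumulates more than $C$ holes' worth of completions. Once $N'$ is chosen in this $v$-independent way, invoking the hypothesis at $N'$ and then Lemma~\ref{lem:recurrence_almost_maximal} finishes the proof; the rest is the same counting contradiction used for Proposition~\ref{pro:close}.
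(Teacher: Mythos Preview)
Your proposal is correct and follows essentially the same route as the paper: both reduce to Lemma~\ref{lem:recurrence_almost_maximal} by choosing a cutoff large enough that the factor beyond the region containing all short subwords carries more than $\log_k C$ holes, and then run the extension/counting contradiction from Proposition~\ref{pro:close}. The only cosmetic difference is that the paper fixes a single length $n$, chooses $N$ so that every length-$n$ subword already occurs in $w[0..N)$, and then extends the completed prefix $u$ of $w[0..N)$ by $w[N..M)$, whereas you bound all first occurrences of lengths $\leq N$ by $B$ and extend from each individual occurrence; your version thus gives $\Sub_w(n)=\Sub_{\hat w}(n)$ for all $n\le N$ in one pass, while the paper obtains it for one $n$ at a time (which still suffices for the lemma since equality at length $N$ forces equality at all shorter lengths).
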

%\begin{comment}
\begin{proof}
	We reduce the proof to an application of Lemma \ref{lem:recurrence_almost_maximal}. For each $n >0$, we  find a completion $\hat{w}$ such that $p_w(n) = p_{\hat{w}}(n)$ which allows us to apply the lemma. Fix $n$. Now choose $N$ such that all subwords of $w$ of length $n$ appear in $w[0..N)$. Now choose $M$ such that $w[N..M)$ has at least $h$ holes where $k^h > C$. Then choose $\hat{w}$ such that $p_w(m) \leq p_{\hat{w}}(m) + C$ for all $m \leq M$. Now we claim that $p_w(n) = p_{\hat{w}}(n)$. Choose $v\in \Sub_w(n)$. Now complete $w[0..N)$ such that $v$ appears as a subword. Call this completed subword $u$. Then there are at least $k^h > C$ completions of $uw[N..M)$. Hence since $p_w(n) \leq p_{\hat{w}}(n) + C$ at least one completion must be a subword of $\hat{w}$. Since $v$ is a prefix of $u$ this implies $v \in \Sub(\hat{w})$. Thus $\Sub_w(n) = \Sub_{\hat{w}}(n)$ and hence $p_w(n) = p_{\hat{w}}(n)$. All that remains is to apply the lemma to conclude that $w$ is recurrent. 
\end{proof}
%\end{comment}

A similar argument provides a generalization of Proposition \ref{pro:varphi}.

\begin{proposition}\label{pro}
	Let $w$ be an infinite partial word with hole function $H(m)$ and let $\varphi$ be an increasing function. If for each $N>0$ there exists a completion $\hat{w}$ such that $p_w(n) \leq p_{\hat{w}}(n) + \varphi(n)$ for all $n \leq N$ and $\lim_{n \rightarrow \infty} \frac{\varphi(H(n))}{k^n} = 0$, then $p_w(n) = p_{\hat{w}}(n)$ and $w$ is recurrent.
\end{proposition}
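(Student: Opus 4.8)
The plan is to deduce recurrence from Lemma~\ref{lem:recurrence_almost_maximal} by manufacturing, for every prescribed length $\ell$, a single completion that agrees with $w$ exactly up to length $\ell$; the mechanism for forcing such agreement is the hole-counting trick already used in Proposition~\ref{pro:varphi}. The extra difficulty here, compared with Proposition~\ref{pro:varphi}, is that the completion $\hat{w}$ is allowed to depend on the cut-off $N$, so no single completion is guaranteed to stay close to $p_w$ for all lengths at once. The way around this is to observe that, for a fixed target length $\ell$, only finitely many ``witness lengths'' are needed, so one completion at a sufficiently large cut-off handles all of them simultaneously.

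First I would fix $\ell > 0$ and list the (finitely many, since $p_w(n) \le k^n$) subwords of $w$ of length at most $\ell$. For each such subword $v$, with $v \uparrow w[i..i+n)$ say, I would reproduce the construction in Proposition~\ref{pro:varphi}: let $j$ be the least index with $H(j) \ge i+n$, and use the hypothesis $\lim_{m\to\infty}\varphi(H(m))/k^m = 0$ to choose $m > j$ with $k^{m-j} > \varphi(H(m))$ (this is possible because $k^{m-j}/\varphi(H(m)) = k^{-j}\,k^m/\varphi(H(m)) \to \infty$ for fixed $j$). Filling the $m-j$ holes of $w[i+n..H(m))$ in all possible ways produces at least $k^{m-j}$ distinct subwords of $w$, each of length $n'(v) = H(m) - i \le H(m)$ and each having $v$ as a prefix. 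I record the witness length $n'(v)$ and set $N_0 = \max_v n'(v)$, the maximum taken over the finitely many $v$; this is a finite number.

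Next I would invoke the hypothesis at the cut-off $N_0$ to obtain one completion $\hat{w}$ with $p_w(n) \le p_{\hat{w}}(n) + \varphi(n)$ for all $n \le N_0$. For each $v$ the $k^{m-j}$ distinct length-$n'(v)$ subwords constructed above all lie in $\Sub_w(n'(v))$, while at most $p_w(n'(v)) - p_{\hat{w}}(n'(v)) \le \varphi(n'(v)) \le \varphi(H(m)) < k^{m-j}$ of them can fail to lie in $\Sub_{\hat{w}}(n'(v))$ (here I use that $\varphi$ is increasing and $n'(v) \le H(m)$). Hence at least one such subword belongs to $\Sub_{\hat{w}}$, and since $v$ is its prefix, $v \in \Sub_{\hat{w}}$. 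As $\Sub_{\hat{w}}(n) \subset \Sub_w(n)$ always holds, this gives $\Sub_w(n) = \Sub_{\hat{w}}(n)$, hence $p_w(n) = p_{\hat{w}}(n)$, for every $n \le \ell$. Since $\ell$ was arbitrary, the hypothesis of Lemma~\ref{lem:recurrence_almost_maximal} is met and $w$ is recurrent; the stated equality $p_w(n) = p_{\hat{w}}(n)$ is exactly what we have verified for the completion attached to each target length.

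I expect the main obstacle to be precisely the dependence of $\hat{w}$ on $N$, and in particular making sure that the finitely-many-witnesses reduction is legitimate: one must check that for a fixed $\ell$ each subword contributes only a finite witness length and that taking a maximum over a finite set of subwords keeps $N_0$ finite, so that a single completion can be extracted from the hypothesis. The remaining work---verifying the off-by-one bookkeeping in counting the $m-j$ holes inside $w[i+n..H(m))$ and confirming the limit yields a usable $m$---is routine and parallels the computation in Proposition~\ref{pro:varphi}.
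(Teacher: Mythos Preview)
Your proposal is correct and matches what the paper indicates. The paper gives no explicit proof here, only the sentence ``A similar argument provides a generalization of Proposition~\ref{pro:varphi},'' and your argument is precisely such a similar argument: you combine the hole-counting trick of Proposition~\ref{pro:varphi} with the reduction to Lemma~\ref{lem:recurrence_almost_maximal} carried out in Corollary~\ref{cor:unlabelled2}. The only organizational difference from the proof of Corollary~\ref{cor:unlabelled2} is that there one first chooses $N$ so that every length-$n$ subword already appears in $w[0..N)$ and then extends uniformly past $N$, whereas you handle each subword with its own witness position and witness length and take a maximum at the end; both routes produce the single cut-off $N_0$ needed to extract one completion from the hypothesis, so this is a matter of bookkeeping rather than a different idea.
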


Another question that one may ask is how the complexity of a completion $p_{\hat{w}}(n)$ relates to the recurrence function $r_w(n)$ for the original partial word $w$. If the complexity of all completions is bounded by $r_w(n)$ (up to a constant) then it turns out that $w$ is actually ultimately recurrent. The following theorem states this rigorously.

\begin{theorem}\label{thm:ultimate_recurrence}
	Let $w$ be an infinite partial word. Then $w$ is ultimately recurrent if and only if for each completion $\hat{w}$ there exists a constant $C$ such that $p_{\hat{w}}(n) \leq r_w(n) + C$ for all $n > 0$.
\end{theorem}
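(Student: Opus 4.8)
The plan is to prove the two implications separately, with the forward direction a short shift-and-count argument and the backward direction a contrapositive built on one explicitly constructed completion.

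For the forward direction, suppose $w$ is ultimately recurrent, so $\sigma_p(w)$ is recurrent for some $p\geq 0$. I would fix an arbitrary completion $\hat{w}$ and split $\Sub_{\hat{w}}(n)$ according to where each occurrence begins. A length-$n$ subword beginning at a position $<p$ is one of at most $p$ full words, one per start position. A length-$n$ subword beginning at a position $\geq p$ is a subword of $\sigma_p(\hat{w})$, which is a completion of the recurrent word $\sigma_p(w)$; hence it is a subword of $\sigma_p(w)$, occurs infinitely often there, and therefore occurs infinitely often in $w$, i.e. it lies in $\RSub_w(n)$. Counting the two classes gives $p_{\hat{w}}(n)\leq r_w(n)+p$, so $C=p$ works uniformly for every completion.

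For the backward direction I would argue the contrapositive: assuming $w$ is not ultimately recurrent, I would exhibit a single completion $\hat{w}$ for which $p_{\hat{w}}(n)-r_w(n)$ is unbounded. First I would record the elementary characterization that $w$ is ultimately recurrent if and only if there is a $p$ such that every subword occurring at a position $\geq p$ is recurrent; negating this, non-recurrent subwords of $w$ occur at arbitrarily large starting positions. Using the diagonal technique from the proof of Theorem~\ref{partialEquiv}, interleaved with the insertion of these far-out non-recurrent occurrences, I would build a completion $\hat{w}$ that \emph{(a)} realizes every recurrent subword of $w$ infinitely often, and \emph{(b)} contains, at starting positions tending to infinity, occurrences of subwords that are non-recurrent in $w$. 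Property \emph{(a)} gives $\RSub_w(n)\subseteq\RSub_{\hat{w}}(n)$, hence $r_w(n)\leq r_{\hat{w}}(n)$; property \emph{(b)} makes the full word $\hat{w}$ itself not ultimately recurrent, since any inserted pattern $v$ has only finitely many occurrences in $\hat{w}$ (each occurrence of $v$ in $\hat{w}$ is also an occurrence of $v$ in $w$, of which there are finitely many), so these non-recurrent patterns sit arbitrarily far out in $\hat{w}$. Since $\hat{w}$ is a full word, Proposition~\ref{pro:bounded} then forces $d_{\hat{w}}(n)=p_{\hat{w}}(n)-r_{\hat{w}}(n)$ to be unbounded, and combining with $r_w(n)\leq r_{\hat{w}}(n)$ yields $p_{\hat{w}}(n)-r_w(n)\geq d_{\hat{w}}(n)$, which is unbounded, contradicting the existence of a bounding constant for this completion.

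The main obstacle is carrying out the interleaved construction behind \emph{(a)} and \emph{(b)} cleanly. I must schedule the tasks (realize the next recurrent subword again, or insert the next far-out non-recurrent occurrence) so that each recurrent subword is realized infinitely often while the inserted non-recurrent patterns are placed at ever-larger, pairwise disjoint positions. The key checks are that at each stage the required occurrence lies strictly to the right of the current construction position --- guaranteed for recurrent subwords by recurrence, and for the non-recurrent insertions by the ``arbitrarily large positions'' property --- and that no inserted pattern can acquire infinitely many occurrences in $\hat{w}$, which holds precisely because occurrences in a completion are a subset of occurrences in $w$.
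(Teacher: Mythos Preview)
Your proposal is correct and parallels the paper closely. The forward direction is identical (the same shift-and-count bound with $C=p$), and the backward direction uses the same contrapositive strategy with an interleaved construction of a completion that contains all recurrent subwords of $w$ together with non-recurrent subwords placed at arbitrarily large positions.

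The one genuine difference is the final accounting in the backward direction. The paper counts directly: after arranging $|v_0|\leq|v_1|\leq\cdots$ it takes $m=|v_C|$ and argues that the length-$m$ right-extensions of $v_0,\ldots,v_C$ in $\hat w$ are $C+1$ distinct subwords lying outside $\RSub_w(m)$, whence $p_{\hat w}(m)\geq r_w(m)+C+1$. You instead route through Proposition~\ref{pro:bounded} applied to the full word $\hat w$: since $\hat w$ is not ultimately recurrent, $d_{\hat w}(n)$ is unbounded, and combining with $r_w(n)\leq r_{\hat w}(n)$ gives the conclusion. This detour forces you to build a slightly stronger completion---each recurrent subword must be realized \emph{infinitely often} so that $\RSub_w(n)\subseteq\RSub_{\hat w}(n)$ (note that the reverse inclusion $\RSub_{\hat w}(n)\subseteq\RSub_w(n)$ is automatic for any completion, so you really need equality here). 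The paper only needs each recurrent subword once. Both routes work; yours is a bit more conceptual and reuses an earlier result, while the paper's is self-contained but requires a small distinctness argument for the extensions.
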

%\begin{comment}
\begin{proof}
	Suppose $w$ is ultimately recurrent. Then there exists $C$ such that $\sigma_C(w)$ is recurrent. Then consider any completion $\hat{w}$. Any subword starting at an index $i \geq C$ is contained in $\RSub(w)$. Thus the only possible subwords in $\Sub(\hat{w}) \backslash \RSub(w)$ must occur starting at positions $0 \leq i < C$. There are at most $C$ such subwords. Thus $p_{\hat{w}}(n) \leq r_w(n) + C$. Now suppose for each completion $\hat{w}$ there exists a constant $C$ such that $p_{\hat{w}}(n) \leq r_w(n) + C$ for all $n > 0$. The intuition of the proof is as follows. If $w$ is not ultimately recurrent we can find as many non-recurrent subwords as we like. This allows us to find a completion $\hat{w}$ that contains all the recurrent subwords and have $p_w(n) - r_w(n)$ be unbounded. 
	
	For the sake of contradiction, assume $w$ is not ultimately recurrent. Then let $\{ w_n \}$ be an enumeration of the elements of $\RSub(w)$. Since $w$ is not ultimately recurrent we can choose a non-recurrent subword $v_0$. Let $i_0$ be an index such that $v_0$ is not a subword of $\sigma_{i_0}(w)$. Complete $w[0..i_0)$ so that it contains $v_0$ as a subword. Now choose $j_0$ such that $w_0$ is a subword of $w[i_0..j_0)$. Complete $w[i_0..j_0)$ so that $w_0$ is a subword. Now since $w$ is not ultimately recurrent there exists a non-recurrent subword $v_1$ appearing in $\sigma_{j_0}(w)$ with $|v_1| \geq |v_0|$. Choose $i_1$ such that $v_1$ is not a subword of $\sigma_{i_1}(w)$. Complete $w[j_0..i_1)$ such that $v_1$ appears as a subword. Now choose $j_1$ such that $w_1$ is a subword of $w[i_1..j_1)$ and complete it so that $w_1$ appears as a subword. Continuing on in this way we see that $\hat{w}$ contains all the recurrent subwords and infinitely many non-recurrent subwords. Now fix a $C$. Choose $m = |v_C|$. Then each $v_i$ for $0\leq i \leq C$ contributes (by extending to the right) a length $m$ subword. In addition each of these is non-recurrent. Also they are all distinct since otherwise they would have to have matching prefixes, a contradiction. Hence $p_{\hat{w}}(m) \geq C + 1 + r_w(m)$. Thus for this completion there exists no constant $C$ such that $p_{\hat{w}}(n) \leq r_w(n) + C$ for all $n>0$, a contradiction.
\end{proof}	
%\end{comment}

We can actually strengthen the above theorem. The proof above shows that if $w$ is ultimately recurrent then the same $C$ works for all completions $\hat{w}$. In other words the bound is uniform across completions. We state this in a corollary.

\begin{corollary}
	Let $w$ be an infinite partial word. If $w$ is ultimately recurrent, then there exists a constant $C$ such that $p_{\hat{w}}(n) \leq r_w(n) + C$ for all $n>0$ and all completions $\hat{w}$ of $w$.
\end{corollary}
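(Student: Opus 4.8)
The plan is to observe that the constant produced in the forward direction of Theorem~\ref{thm:ultimate_recurrence} is already independent of the completion, so the corollary is obtained simply by isolating that dependence rather than by any new argument. First I would invoke ultimate recurrence of $w$ to fix an integer $C$ with the property that $\sigma_C(w)$ is recurrent. The crucial feature to emphasize is that this $C$ is a property of $w$ alone: it is the shift amount that makes $w$ recurrent, and its definition makes no reference whatsoever to any completion $\hat{w}$.

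Next I would fix an arbitrary completion $\hat{w}$ and an arbitrary length $n > 0$, and partition $\Sub_{\hat{w}}(n)$ according to whether each subword lies in $\RSub(w)$. Since $\sigma_C(w)$ is recurrent, every factor of $\hat{w}$ beginning at an index $i \geq C$ is a recurrent subword of $w$, hence lies in $\RSub(w)$. Consequently, any length-$n$ subword of $\hat{w}$ that fails to be recurrent must begin at one of the positions $0 \leq i < C$. Because $\hat{w}$ is a full word, each such starting position contributes exactly one factor of length $n$, so there are at most $C$ non-recurrent subwords of length $n$ in $\hat{w}$. Bounding the recurrent ones by $r_w(n)$ then gives $p_{\hat{w}}(n) \leq r_w(n) + C$.

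The only point requiring care — and what I would flag as the real content of the corollary — is the quantifier order: Theorem~\ref{thm:ultimate_recurrence} as stated allows $C$ to depend on $\hat{w}$, whereas here I must exhibit a single $C$ valid across all completions simultaneously. I would therefore stress that the $C$ above was never chosen in terms of $\hat{w}$ or $n$, so since $\hat{w}$ was arbitrary, the same $C$ witnesses the bound for every completion at once. I do not expect a genuine obstacle; the work lies entirely in recognizing that the forward implication of Theorem~\ref{thm:ultimate_recurrence} was in fact proved with a completion-independent constant, and the corollary merely records this stronger, uniform conclusion.
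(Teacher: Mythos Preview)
Your proposal is correct and matches the paper's approach exactly: the paper states the corollary without a separate proof, merely observing that in the forward direction of Theorem~\ref{thm:ultimate_recurrence} the constant $C$ was chosen as a shift index making $\sigma_C(w)$ recurrent, independently of the completion $\hat{w}$. Your write-up simply makes this observation explicit and carries out the same partition of $\Sub_{\hat{w}}(n)$ into recurrent and non-recurrent subwords, bounding the latter by the number of starting positions below $C$.
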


Oftentimes if every completion of an infinite partial word $w$ has a certain property, then $w$ has it as well. In particular this property holds with respect to ultimate recurrence.

\begin{proposition}
\label{pro:prop10}
	Let $w$ be an infinite partial word. Then $w$ is ultimately recurrent if every completion $\hat{w}$ is ultimately recurrent. 
\end{proposition}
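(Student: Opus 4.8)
The plan is to prove the contrapositive: assuming $w$ is \emph{not} ultimately recurrent, I will exhibit a single completion of $w$ that fails to be ultimately recurrent, which contradicts the hypothesis. The engine of the argument is Theorem~\ref{thm:ultimate_recurrence}, which I intend to invoke twice, once for $w$ and once for the offending completion regarded as an infinite (full) partial word whose only completion is itself.

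First I would unwind the hypothesis through Theorem~\ref{thm:ultimate_recurrence}. Since $w$ is not ultimately recurrent, the characterization in that theorem tells us it is \emph{not} the case that every completion admits a constant bounding $p_{\hat{w}}$ by $r_w$; hence there exists at least one completion $\hat{w}$ of $w$ for which no constant $C$ satisfies $p_{\hat{w}}(n) \leq r_w(n)+C$ for all $n$. Equivalently, $p_{\hat{w}}(n) - r_w(n)$ is unbounded in $n$. I fix this particular $\hat{w}$ for the rest of the argument.

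The next step is to transfer this unboundedness from $r_w$ to $r_{\hat{w}}$. The key observation is that $\RSub(\hat{w}) \subseteq \RSub(w)$: if a full word $u$ occurs as a genuine subword $u = \hat{w}[i..i+|u|)$, then $u \uparrow w[i..i+|u|)$ because $\hat{w}$ is a completion of $w$, so each occurrence of $u$ in $\hat{w}$ is an occurrence of $u$ in $w$; consequently a subword recurring infinitely often in $\hat{w}$ recurs infinitely often in $w$. Therefore $r_{\hat{w}}(n) \leq r_w(n)$ for all $n$, and so
\[
d_{\hat{w}}(n) = p_{\hat{w}}(n) - r_{\hat{w}}(n) \geq p_{\hat{w}}(n) - r_w(n),
\]
whose right-hand side I have just shown to be unbounded. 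Hence $d_{\hat{w}}$ is unbounded.

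Finally I would apply Theorem~\ref{thm:ultimate_recurrence} to $\hat{w}$ itself. As a full word, $\hat{w}$ has exactly one completion, namely $\hat{w}$, so the theorem specializes to the statement that $\hat{w}$ is ultimately recurrent if and only if $p_{\hat{w}}(n) - r_{\hat{w}}(n) = d_{\hat{w}}(n)$ is bounded (Proposition~\ref{pro:bounded}, applied to the zero-hole word $\hat{w}$, yields the same conclusion). Since $d_{\hat{w}}$ is unbounded, $\hat{w}$ is not ultimately recurrent, contradicting the assumption that every completion of $w$ is ultimately recurrent. This forces $w$ to be ultimately recurrent. I expect the only real subtlety to be the second, specialized application of the theorem to $\hat{w}$ together with the inequality $r_{\hat{w}}(n) \leq r_w(n)$; everything else is a short chase through the definitions.
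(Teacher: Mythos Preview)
Your argument is correct. Both you and the paper prove the contrapositive and both lean on Theorem~\ref{thm:ultimate_recurrence}, so the overall strategy is the same; the difference is in how the offending completion is shown not to be ultimately recurrent. The paper simply points back to the explicit completion built in the proof of Theorem~\ref{thm:ultimate_recurrence} and asserts that it is not ultimately recurrent (which is clear once you notice that the non-recurrent subwords $v_0,v_1,\ldots$ are planted at positions tending to infinity). You instead use Theorem~\ref{thm:ultimate_recurrence} as a black box to extract some completion $\hat{w}$ with $p_{\hat{w}}(n)-r_w(n)$ unbounded, then add the inequality $r_{\hat{w}}(n)\le r_w(n)$ (which is correct: any occurrence of $u$ in $\hat{w}$ is an occurrence of $u$ in $w$) and finish with Proposition~\ref{pro:bounded} applied to the full word $\hat{w}$. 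Your route avoids reopening the construction and works for \emph{any} completion witnessing the failure of the bound, at the small cost of invoking one more proposition; the paper's route is a one-liner but tacitly requires the reader to revisit the construction.
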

%\begin{comment}
\begin{proof}
	If $w$ is not ultimately recurrent, then the completion constructed in the proof of Theorem \ref{thm:ultimate_recurrence} is not ultimately recurrent. 
\end{proof}
%\end{comment}

We now introduce the notion of a {\em most complex completion}. The motivation is that this concept helps us understand the role of recurrent subwords in completions. Let $w$ be an infinite partial word. We say that $\hat{w}$ is a most complex completion of $w$ if for all completions $\bar{w}$ of $w$ and all $n>0$ we have $p_{\bar{w}}(n) \leq p_{\hat{w}}(n)$. In general a most complex completion of an infinite partial word may not exist. However, assuming that $w$ possesses such a completion we have the following result which states that a most complex completion must contain all the recurrent subwords. The intuition here is straightforward. In a rough sense one gets the recurrent subwords of $w$ for free. We can delay putting them in the completion for arbitrarily long, and they still occur after that for us to capture. Thus it is not difficult to construct a completion of higher complexity if this is not the case.

\begin{proposition}
\label{pro:prop11}
	Let $w$ be an infinite partial word. If $\hat{w}$ is a most complex completion, then  $\RSub(w) \subset \Sub(\hat{w})$.
\end{proposition}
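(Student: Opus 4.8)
The plan is to argue by contradiction, exploiting exactly the intuition quoted before the statement: recurrent subwords come ``for free,'' so if $\hat{w}$ missed one we could splice it in far to the right and build a strictly more complex completion. Concretely, suppose $\hat{w}$ is a most complex completion but there is some $v \in \RSub_w(n)$ with $v \notin \Sub(\hat{w})$. The goal is to produce a competing completion $\bar{w}$ with $p_{\bar{w}}(n) > p_{\hat{w}}(n)$, contradicting the defining property of a most complex completion.

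First I would localize the length-$n$ subwords of $\hat{w}$ on a finite prefix. Since $\Sub_{\hat{w}}(n)$ is finite (it has exactly $p_{\hat{w}}(n)$ elements) and each of its members occurs at least once in $\hat{w}$, there is an index $M_0$ such that every word of $\Sub_{\hat{w}}(n)$ already occurs inside $\hat{w}[0..M_0)$. This captures the idea that whatever complexity $\hat{w}$ achieves at length $n$ is witnessed on a finite initial segment.

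Next I would use recurrence to create room on the right. Because $v$ is recurrent in $w$, it occurs infinitely often, so I can choose an occurrence $w[j..j+n) \uparrow v$ with $j \geq M_0$; then the window $[j,j+n)$ is disjoint from $[0..M_0)$. I define $\bar{w}$ to agree with $\hat{w}$ at every position outside this window, and to fill the holes of $w$ inside $[j,j+n)$ so that $\bar{w}[j..j+n) = v$. This is legitimate precisely because $w[j..j+n) \uparrow v$ (so $\bar{w}$ is consistent with $w$ on the window) and because $\hat{w}$ is a completion of $w$ (so $\bar{w}$ is consistent with $w$ elsewhere); hence $\bar{w}$ is a genuine completion of $w$. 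Finally I would compare complexities at length $n$: since $\bar{w}$ and $\hat{w}$ coincide on $[0..M_0)$, the prefix $\bar{w}[0..M_0)$ contains every word of $\Sub_{\hat{w}}(n)$, so $\Sub_{\hat{w}}(n) \subseteq \Sub_{\bar{w}}(n)$; but $v = \bar{w}[j..j+n) \in \Sub_{\bar{w}}(n)$ while $v \notin \Sub_{\hat{w}}(n)$, giving $p_{\bar{w}}(n) \geq p_{\hat{w}}(n)+1$, the desired contradiction.

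The main obstacle is ensuring that inserting $v$ does not cost us any subword that $\hat{w}$ already had. Editing a completion at the window $[j,j+n)$ can only disturb subwords that overlap that window, so the whole argument hinges on pushing the chosen occurrence of $v$ strictly to the right of the finite prefix $[0..M_0)$ that already realizes all of $\hat{w}$'s length-$n$ subwords. Guaranteeing such a far-right occurrence is exactly what recurrence of $v$ provides, and it is the one place the hypothesis $v \in \RSub(w)$ is used.
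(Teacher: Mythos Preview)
The paper states Proposition~\ref{pro:prop11} without proof; it only gives the informal intuition in the preceding paragraph (``we can delay putting them in the completion for arbitrarily long, and they still occur after that for us to capture''). Your argument is a correct formalization of precisely that idea: you localize all length-$n$ subwords of $\hat{w}$ on a finite prefix $[0..M_0)$, use recurrence of $v$ to find an occurrence at some $j\geq M_0$, modify $\hat{w}$ only on the window $[j..j+n)$ to realize $v$ there, and observe that this keeps every old length-$n$ subword (they all live in the untouched prefix) while adding the new one $v$, contradicting maximality. Every step is sound, including the verification that $\bar{w}$ is a genuine completion and that the edited window is disjoint from $[0..M_0)$. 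There is nothing to compare against beyond the paper's one-sentence sketch, and your proof matches it.
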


%\begin{comment}
%\begin{proof}
%	We argue by contradiction. The plan is to construct a new completion $\bar{w}$ that has strictly higher complexity than $\hat{w}$ for some length $n$. Suppose there existed $v\in \RSub_w(n)$ such that $v \not\in \Sub_{\hat{w}}(n)$. Since every subword is contained in some finite prefix there exists $N$ such that every subword of length $n$ of $\hat{w}$ appears in the prefix of length $N$ of $\hat{w}$. Since $v$ is recurrent there exists  $i > N$ such that $v \uparrow w[i..i+n)$. We construct $\bar{w}$ as follows. Fill in the first $N$ positions as in $\hat{w}$. Fill in $w[i..i+n)$ so that $v = \bar{w}[i..i+n)$. It does not matter how we fill in the rest of the holes. Then $\bar{w}$ contains all the length $n$ subwords that $\hat{w}$ contains in addition to the word $v$. Hence $p_{\bar{w}}(n) > p_{\hat{w}}(n)$ contradicting the fact that $\hat{w}$ achieves the maximal complexity among the completions. 
%\end{proof}
%\end{comment}

\section{Conclusion}
\label{sec:conclusion}

Intuitively all the above work culminates to show that completions can achieve complexities equal (or ``close'') to that of the original partial word if and only if the word is recurrent or ultimately recurrent. Another interesting avenue of research would be to investigate whether a relation exists between the growth of $r_w(n)$ and that of $p_w(n)$. Although it would be nice, the answer seems to be no. Given any constant $\delta < 1$ we can find a partial word with infinitely many holes such that $\frac{r_w(n)}{p_w(n)} \rightarrow \delta$. Also, even if we impose the restriction that $r_w(n)$ be linear then we still have a fair bit of freedom with the complexity of $p_w(n)$. In particular we can make it so that asymptotically $p_w(n)$ attains any polynomial complexity. We can also attain some intermediate complexities, i.e. functions of the form $2^{\sqrt{n}}$. The construction of these examples is actually quite simple. You just have a word that is all $a$'s with holes at positions $H(n)-1$. Since the hole functions in all of our constructions are eventually increasing we see that any word with at least two $b$'s is not recurrent. Since there are exactly $n+1$ words of length $n$ with at most one $b$ we see that $r_w(n) = n+1$. By controlling the growth of $H(n)$ we can control the growth of $p_w(n)$. The slower $H(n)$ grows the faster $p_w(n)$ grows.

%\nocite{*}
\bibliographystyle{eptcs}
\bibliography{bibliography}
\end{document}